\newenvironment{HP}[1][H]{%
   \begin{algorithm}[#1]%
  }{\end{algorithm}}
\newcommand{\removelatexerror}{\let\@latex@error\@gobble}
\newcommand{\dl}{\text{\upshape\textsf{d{\kern-0.03em}L}}\xspace}
\newcommand{\keyx}{KeYmaera\,X\xspace}
\newcommand{\dlold}{\text{\upshape\textsf{d{\kern-0.05em}$\mathcal{L}$}}\xspace}
\newcommand{\stateset}{\ensuremath{\mathscr{S}}\xspace}
\newcommand{\mdl}{\ensuremath{\mathscr{M}}\xspace}
\newcommand{\real}{\ensuremath{\mathbb{R}}\xspace}
\newcommand{\rational}{\ensuremath{\mathbb{Q}}\xspace}
\newcommand{\Tau}{\mathsf{T}} 
\newcommand{\aMinN}{\ensuremath{a_n^{min}}\xspace}
\newcommand{\aMaxN}{\ensuremath{a_n^{max}}\xspace}
\newcommand{\aMinS}{\ensuremath{a_s^{min}}\xspace}
\newcommand{\aReq}{\ensuremath{\operatorname{a^{req}}}\xspace}
\newcommand{\ath}{\ensuremath{\operatorname{a^{th}}}\xspace}
\newcommand{\aReqi}{\ensuremath{\operatorname{a_0^{req}}}\xspace}
\newcommand{\aReqT}{\ensuremath{\operatorname{a_\Tau^{req}}}\xspace}
\newcommand{\opif}{\ensuremath{\operatorname{if}\xspace}}
\newcommand{\opelse}{\ensuremath{\operatorname{else}\xspace}}
\newcommand{\opmsd}{\operatorname{msd}}
\newcommand{\msd}{\ensuremath{\opmsd\left(t \right)}\xspace}
\newcommand{\msdt}{\ensuremath{\opmsd\left(\Tau \right)}\xspace}
\newcommand{\msdi}{\ensuremath{\opmsd\left(0 \right)}\xspace}
\newcommand{\safe}{\ensuremath{\mathit{ok}}\xspace} 
\newcommand{\assume}{\ensuremath{\mathit{init}}\xspace}
\newcommand{\guarantee}{\ensuremath{\mathit{guarantee}}\xspace}
\newcommand{\ego}{ego-vehicle\xspace}
\newcommand{\env}{\emph{env}\xspace}
\newcommand{\ctrl}{\emph{ctrl}\xspace}
\newcommand{\plant}{\emph{plant}\xspace}
\DeclarePairedDelimiter\abs{\lvert}{\rvert}%
\newcommand{\pos}{\ensuremath{\mathit{x}}\xspace}
\newcommand{\vel}{\ensuremath{\mathit{v}}\xspace}
\newcommand{\delt}{\ensuremath{\mathit{t}}\xspace}
\newcommand{\acc}{\ensuremath{\mathit{a}}\xspace}
\newcommand{\mone}{\ensuremath{\opmsd_1}\xspace}
\newcommand{\mthree}{\ensuremath{\opmsd_3}\xspace}
\newcommand{\mfive}{\ensuremath{\opmsd_5}\xspace}
\newtheorem{theorem}{Theorem}
\theoremstyle{remark}
\newtheorem{remark}{Remark}
\newtheorem{sg}{SG}
\newtheorem{fsr}{FSR}
\newenvironment{newfsr}[1]
  {\newfsrin}
  {\endnewfsrin}
\begin{document}

\title{Formal Development of Safe Automated Driving using Differential Dynamic Logic}

\author{Yuvaraj Selvaraj, Wolfgang Ahrendt, and Martin Fabian
\thanks{This work was supported by FFI, VINNOVA under grant number 2017-05519, \textit{Automatically Assessing Correctness of Autonomous Vehicles--Auto-CAV} and partly supported by the Wallenberg AI, Autonomous Systems and Software program (WASP) funded by the Knut and Alice Wallenberg Foundation.}
\thanks{Yuvaraj Selvaraj is with Zenseact, 417 56 Gothenburg, Sweden,
and also with the Department of Electrical Engineering, Chalmers
University of Technology, 412 96 Gothenburg, Sweden (e-mail:
yuvaraj.selvaraj@zenseact.com).}%
\thanks{Wolfgang Ahrendt is with the Department
of Computer Science and Engineering, Chalmers University of
Technology, 412 96 Gothenburg, Sweden (e-mail: ahrendt@chalmers.se).}
\thanks{Martin Fabian is with the Department of Electrical Engineering, Chalmers University of Technology, 412 96 Gothenburg, Sweden (e-mail: fabian@chalmers.se).}%
}




\maketitle

\begin{abstract}
The challenges in providing convincing arguments for safe and correct behavior of automated driving (AD) systems have so far hindered their widespread commercial deployment. Conventional development approaches such as testing and simulation are limited by non-exhaustive analysis, and can thus not guarantee correctness in all possible scenarios. Formal methods is an approach to provide mathematical proofs of correctness, using a model of a system, that could be used to give the necessary arguments. This paper investigates the use of differential dynamic logic and the deductive verification tool \keyx in the development of an AD feature. Specifically, formal models and safety proofs of different design variants of a Decision \& Control module for an in-lane AD feature are presented. In doing so, the assumptions and invariant conditions necessary to guarantee safety are identified, and the paper shows how such an analysis helps during the development process in requirement refinement and formulation of the operational design domain. Furthermore, it is shown how the performance of the different models is formally analyzed exhaustively, in \emph{all} their allowed behaviors.
\end{abstract}

\begin{IEEEkeywords}
Automated driving, formal methods, safety argument, deductive verification, logic. 
\end{IEEEkeywords}

\section{INTRODUCTION} \label{sec:intro}
\emph{Automated driving} (AD) has many potential benefits~\cite{litman2017autonomous}, such as reducing road traffic accidents, reducing driver stress, improving energy efficiency, availing independent mobility to people who cannot or should not drive, etc. The level of automation can range from supervised support features like \emph{advanced driver assistance systems} (ADAS) to unsupervised AD features~\cite{saej3016}. However, there are many barriers for the commercial deployment of full autonomy in road vehicles; particularly crucial and technically challenging is providing convincing arguments for safe and correct behavior of AD systems~\cite{koopman2016challenges,koopman2017autonomous}.

Several approaches could be used to argue about AD safety~\cite{koopman2019credible}. Though not specifically intended for AD, a well-established approach in the automotive industry is to show conformance to safety standards such as ISO 26262~\cite{iso201826262}, which addresses hazards due to malfunctioning behavior, and ISO/PAS 21448~\cite{isopas2019}, which addresses hazards due to unintended behavior. The high complexity of AD systems makes it difficult to satisfy some safety objectives in the development activities recommended by such standards. For instance, providing sufficient evidence for the correctness of the safety requirements and their verification in each phase of the development process is a significant challenge~\cite{bergenhem2015reach,koopman2016challenges}.

One way to tackle this challenge is to restrict the op\-er\-a\-tional environment of the AD system through an \emph{operational design domain} (ODD), defined in SAE J3016~\cite{saej3016} as \say{Operating conditions under which a given driving automation system or feature thereof is specifically designed to function, including [\dots\!] geographical, and time-of-day restrictions [\dots\!].}
Thus, the ODD limits the scope of development activities like \emph{hazard analysis and risk assessment} (HARA), requirement refinement, verification, etc. Even so, it is challenging to sufficiently identify the ODD such that it can provide safety requirements to implement and verify~\cite{gyllenhammar2020towards,koopman2019many}.

\subsection{Illustrative Example}\label{sec:example}
To emphasize the challenges involved, consider an autonomous vehicle (hereafter referred to as the \emph{\ego}) that offers in-lane unsupervised AD at speeds up to $\SI{100}{\km/\hour}$. This is realized using the simplified functional architecture in Fig.~\ref{fig:archgen}, where \begin{enumerate*}[label=(\roman*)]
  \item \emph{Sense}, perceives the environment and provides information such as the vehicle state, traffic state, etc.;
  \item \emph{Decision \& Control},  decides on when and how to act (e.g. accelerating/steering commands); and
  \item \emph{Actuate}, executes the decisions using the respective actuators.
\end{enumerate*}

\tikzset{
     block/.style={rectangle, draw, text width=5.5em,
                   text centered, rounded corners, minimum height=3em},
     arrow/.style={-{Stealth[]},thick}
     }

\begin{figure}[!t] 
\centering
\begin{tikzpicture}
    \node [block] (sense) {Sense};
    \node [block,right=of sense] (dc)  {Decision \& Control};
    \node [block,right=of dc] (act)  {Actuate};
    
    \draw [arrow] (sense) -- (dc);
    \draw[arrow] (dc) -- (act);
\end{tikzpicture}
\caption{A simplified functional architecture for an AD system}
\label{fig:archgen}
\end{figure}
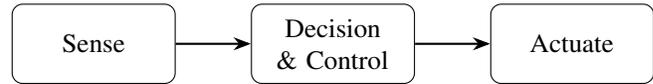

The feature definition above is overly general and could be restricted by adding ODD conditions as \say{[\dots] speeds up to $\SI{100}{\km/\hour}$ in Gothenburg during daytime.} Ideally, all such conditions should be explicitly defined at the start of the development process. However, it is notably difficult to do so, due to the wide range of parameters that affect the ODD and the lack of a standardized procedure to formulate the ODD. Still, these conditions are crucial, as they form the basis for the subsequent HARA to obtain the safety requirements, which must be unambiguously specified. This highlights the need for a systematic method to identify such conditions and refine them throughout the design process as envisioned in~\cite{colwell}. Without presenting a detailed HARA in this paper, a possible high-level safety requirement, termed a \emph{safety goal} (\emph{SG}) within ISO~26262, is:   
\begin{sg}\label{sg:1}
\emph{The \ego shall maintain a safe distance to other objects in front such that a collision is at all times avoided.}
\end{sg}

As \emph{SG}\,\ref{sg:1} is to be realized using Fig.~\ref{fig:archgen}, a refinement of it, using the ISO~26262 requirement refinement process, could be the \emph{functional safety requirement} (\emph{FSR}):
\begin{fsr}\label{fsr:dc}
\emph{Decision \& Control shall at all times output a safe acceleration request to avoid a collision with any object in front.}
\end{fsr}

The successful verification of \emph{SG}\,\ref{sg:1} and \emph{FSR}\,\ref{fsr:dc} is necessary to make an overall safety argument. But, their generality presents difficulties in the verification and consequently for the safety argument. For instance, \emph{SG}\,\ref{sg:1} and \emph{FSR}\,\ref{fsr:dc} require the ego-vehicle to avoid collisions with other objects \emph{at all times}. Obviously, this is desirable, but difficult to verify for all the situations that the \ego might encounter. More pragmatic is to guarantee that the safety requirements are fulfilled under specific ODD assumptions such as system dynamics, object behavior, system limitations, etc.  

Concretely, consider a case where the controller is to guarantee that the \ego stops at or before a critical position. This can model stopping at a traffic light, or not colliding with a leading vehicle, \emph{SG}\,\ref{sg:1}. Fig.~\ref{fig:ill1} illustrates the \ego's position and velocity on the \emph{position}-\emph{velocity} plane for a given set of inputs and parameter values. The red dashed lines represent the  critical position and zero velocity. A subset of the states, the red region, are the \emph{forbidden} states where the \ego exceeds zero velocity beyond the critical position. Though the other states do not directly violate the requirement, some of them must be avoided as due to the system dynamics the controller cannot guarantee to avoid the forbidden states. Thus, the state-space is partitioned into \emph{admissible} and \emph{inadmissible} states. The admissible states, the purple region, are the states from where the control system can guarantee that the forbidden states are avoided. The inadmissible states, the white region, are the states from where reaching the forbidden region cannot be avoided. Identifying the ODD assumptions and invariant conditions that characterize such admissible and inadmissible states is crucial to the safety verification of \emph{SG}\,\ref{sg:1} and \emph{FSR}\,\ref{fsr:dc}. Furthermore, they are also necessary for obtaining specifications at the hardware and software level to implement the controller. In this regard, this paper focuses on how formal methods can aid in the development of a safe AD system.

\begin{figure}[!t] 
\centering
\includegraphics[width=0.9\linewidth, keepaspectratio]{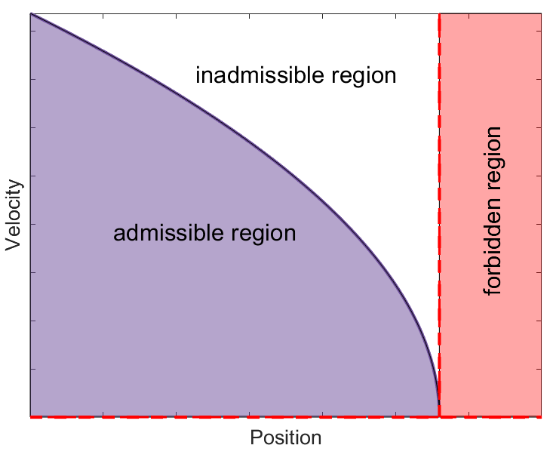}
\caption{An illustration of a two-dimensional state-space representation of a control system realizing the example of Section~\ref{sec:example}.}
\label{fig:ill1}
\end{figure}

\subsection{Contributions}
This paper investigates the use of \emph{differential dynamic logic} (\dl)~\cite{platzer2008differential,platzer2012logics,platzer2018logical} and the associated deductive verification tool \keyx~\cite{fulton2015keymaera}, to formally model and verify correctness of different design variants of a Decision \& Control module for an in-lane AD feature. Deductive verification and the expressiveness of \dl enable symbolic verification of parametric systems so that a proof of a single model can be used in the safety argument of a large number of concrete implementations. 

The logic \dl supports the specification and verification of \emph{hybrid systems}, which are mathematical models of systems that combine discrete and continuous dynamics. Formal analysis of any AD feature typically requires reasoning about continuous state variables (e.g. position, velocity) and discrete decisions (e.g. to brake, to accelerate), thus making \dl particularly suitable to achieve a good trade-off between formal models close to reality and a tractable analysis. Additionally, as shown in this work, the formulas of \dl used to specify safety requirements like \emph{FSR}\,\ref{fsr:dc} can be represented in the form of \emph{assume-guarantee} requirements to permit compositional reasoning based on principles of contract-based design~\cite{sangiovanni2012taming}, thereby reducing design and verification complexity. 

To summarize, this paper:
\begin{enumerate}[leftmargin=0.44cm]
    \item presents formal models of different design variants of a Decision \& Control module for an in-lane AD feature and proves their correctness with respect to safety requirements. The models differ in the decision-making logic used and it is shown how formal verification identifies safety-critical edge cases that arise due to subtle changes in the decision-making logic (Models~\ref{alg:sm1}-\ref{alg:sm5} in Section~\ref{sec:models});

    \item provides safety proofs in the form of assume-guarantee requirements in \dl as evidence for the safety argument of the AD feature (Theorems~\ref{th:1}-\ref{th:5} in Section~\ref{sec:models}). In doing so, during the formal analysis, the necessary assumptions and the invariant conditions required to guarantee safety are formulated. Furthermore, it is discussed how such an analysis could help in requirement refinement during the practical industrial development process (Section~\ref{sec:identification});

    \item shows how formal proofs can be used to analyze the performance (with respect to a defined metric) of the different models, in \emph{all} behaviors allowed by the respective model, in contrast to conventional approaches that involve independent analysis of different (non-exhaustive) sets of scenarios (Section~\ref{sec:performance}). 
\end{enumerate}


\section{PROBLEM FORMULATION} \label{sec:prob}
The focus of this paper is the safety verification of the Decision \& Control module for an in-lane AD feature. While the simplified functional architecture in Fig.~\ref{fig:archgen} is representative of any AD feature, the Decision \& Control needs to be refined to concretely formulate the verification problem.  

A typical driving task can be divided into three levels of decision-making~\cite{michon1985critical}: \emph{strategic} (e.g. route planning over long time horizon), \emph{tactical} (e.g. maneuvering over a few seconds), and \emph{operational} (e.g. speed control on milliseconds level). For an AD feature, SAE J3016~\cite{saej3016} requires the AD system to completely perform the \emph{dynamic driving task} (DDT), defined as \say{all of the real-time operational and tactical functions required to operate a vehicle in on-road traffic, excluding the strategic functions [\dots]}. Accordingly, this work considers only the tactical and operational levels of the Decision \& Control module. 

While performing the DDT, the \ego is required to handle a variety of in-lane scenarios like maintaining a safe distance to a lead vehicle, stopping for an obstacle, etc., and is subjected to different constraints to ensure safety, comfort, etc. This driving task could be solved by any algorithm (e.g. feedback control law, reinforcement learning), some of which might be hard to analyze and verify. One way to make the safety verification tractable is to separate the nominal functions from the safety functions by means of the \emph{Nominal Controller} and the \emph{Safety Controller} as shown in Fig.~\ref{fig:funcarch}. The Nominal Controller represents any algorithm solving the nominal driving task and requests a nominal acceleration $a_n$. The Safety Controller ensures that only safe decisions are communicated to the \emph{Vehicle Control} by evaluating $a_n$ and calculating a safe acceleration $a_s$, thereby satisfying the safety requirement. Thus, the safety verification can be reformulated as a verification problem of a simpler component, the Safety Controller.   

As shown in Fig.~\ref{fig:funcarch}, in addition to $a_n$, the Safety Controller receives information about the vehicle state from the \emph{Sense} module and information about the safety constraint given by a critical position $x_c$ and a critical velocity $v_c$ (see Section~\ref{sec:models}) from the \emph{Situation Assessment} function block. With this information, the Safety Controller calculates $a_s$, which is considered safe if it fulfills the safety constraint. The {Vehicle Control} then, takes $a_s$ as a reference and calculates trajectories for vehicle motion to solve the operational function of the DDT. Finally, these motion trajectories are executed by the \emph{Actuate} module through the various actuators in the \ego.  

The architecture in Fig.~\ref{fig:funcarch} permits to abstract away from the possibly complex design of the Nominal Controller, and verify that $a_s$ always respects the safety constraint by reasoning about the decision-making in the Safety Controller. Moreover, specifying a safety requirement like \emph{FSR}\,\ref{fsr:dc} using a safety constraint from a separate functional block provides the flexibility to dynamically calculate constraints for a variety of situations. Furthermore, this architecture also enables modular reasoning and therefore the verification approach discussed in this paper can easily be adapted to other AD features outside the scope of in-lane AD. Though this approach makes the safety verification efficient and tractable, one might rightly argue that certain assumptions have to be made about the Nominal Controller and possibly other functions within Decision \& Control to guarantee that the safety requirements are satisfied. The rest of this paper deals with this safety verification problem where, the Decision \& Control module of Fig.~\ref{fig:funcarch} is formally verified, and in doing so, the assumptions and invariant conditions necessary to guarantee safety are identified.

\tikzset{
     block/.style={rectangle, draw,
                   text width=5em,
                   text centered, rounded corners, minimum height=3em, on grid},
     arrow/.style={-{Stealth[]},thick},
     no arrow/.style={-,every loop/.append style={-},thick},
     smallblock/.style={draw, rectangle,text width=3em, text centered, rounded corners, minimum height=2em}
     }

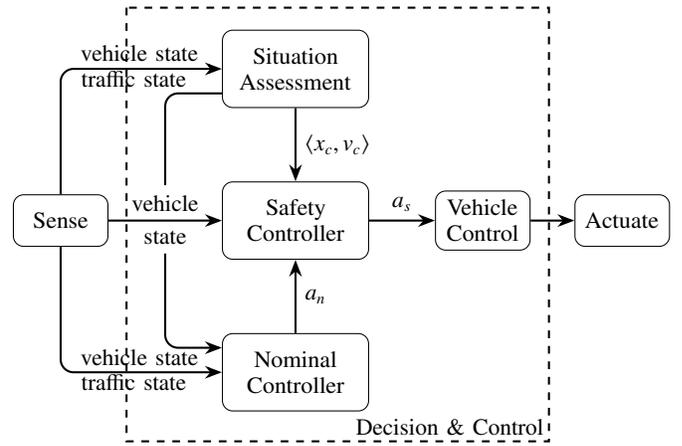
\begin{figure}[!t]
    \centering
    \resizebox{\linewidth}{!}{%
    \begin{tikzpicture}[auto, font=\small]
    
    \node [smallblock] (sense) at (-3.5,0) {Sense};
    \node [block, right=of sense] (safety) at (-1.3,0) {Safety Controller}; 
    \node [smallblock, right=of safety](vehicle) at (0.6,0) {Vehicle Control};
    \node [smallblock, right =of vehicle](actuate) at (2.5,0){Actuate};
    \node [block, above = of safety.north](sa) {Situation Assessment};
    \node [block, below=of safety.south](nominal) {Nominal Controller};
    \draw[thick,dashed] ($(sa.north west)+(-1.3,0.3)$)  rectangle ($(vehicle.south east)+(0.25,-2.6)$);
    
    \draw [arrow] (sense) --  node[name=v,above] {vehicle}
    node[name=s,below] {state}
    (safety);
    \draw [arrow] (safety) -- node[name=as] {$a_s$} (vehicle);
    \draw [arrow] (vehicle) -- (actuate);
    \draw [arrow] (nominal) -- node[name=an,right,midway] {$a_n$}(safety); 
    \draw [arrow] (sa) -- node[name=con]{$\langle x_c,v_c\rangle$}(safety);
    \draw [arrow,rounded corners=5pt] (sense) |- (sa);
    \draw [arrow,rounded corners=5pt] (sense) |-  (nominal); 
    \draw [no arrow, rounded corners=5pt]($(sa.south west)+(0,0.2)$)-| (v);
    \draw [arrow,rounded corners=5pt](s) |- ($(nominal.north west)+(0,-0.2)$);

    \node[name=vs1] at ($(sa.north west)+(-1.125,-0.325)$) 
    {vehicle state};
    \node[name=ts1] at ($(sa.north west)+(-1.2,-0.675)$) 
    {traffic state};
    \node[name=vs2] at ($(nominal.north west)+(-1.125,-0.325)$) 
    {vehicle state};
    \node[name=ts2] at ($(nominal.north west)+(-1.2,-0.675)$) 
    {traffic state};
    \node[name=ts2] at
    ($(vehicle.south east)+(-1.1,-2.4)$)
    {Decision \& Control};
    
    \end{tikzpicture}
    }
    \caption{A refinement of the functional architecture in Fig.~\ref{fig:archgen}.}
    \label{fig:funcarch}
\end{figure}

\section{PRELIMINARIES} \label{sec:prelims}
The logic \dl supports the specification and verification of \emph{hybrid systems}, that is, mathematical models of systems that combine discrete dynamics (behavior that changes discretely) with continuous dynamics (behavior that changes continuously with time). This makes \dl particularly suitable for the modeling and verification of AD systems since it can reason about continuous state variables (e.g. position, velocity) and discrete decisions (e.g. to brake, to accelerate). 

To model hybrid systems, \dl has the notion of \emph{hybrid programs} (HP) that consist of different program statements, including differential equations to describe continuous behavior. HPs are defined by the following grammar, where $\upalpha,\,\upbeta$ are HPs, $x$ is a variable, $e$ is a term\footnote{terms are polynomials with rational coefficients defined by $e,\,\Tilde{\!e}\;\Coloneqq x\;|\;c\in\rational\;|\;e+\,\Tilde{\!e}\;|\;e\cdot\,\Tilde{\!e}$} possibly containing $x$, and $P,Q$ are formulas of first-order logic of real arithmetic\footnote{First-order logic formulas of real arithmetic are defined by $P,\,Q\Coloneqq e\ge\,\Tilde{\!e}\;|\;e=\,\Tilde{\!e}\;|\;\neg P\;|\;P \land Q\;|\;P \lor Q\;|\;P\to Q\;|\;P\leftrightarrow Q\;|\;\forall x P\;|\;\exists x P$ }:
\begin{equation*}\label{eqn:HPgrammar}
\upalpha,\,\upbeta\,\Coloneqq x \coloneqq e \, | \, x\coloneqq * \, | \, ?P \, | \, x'=f(x)\,\&\,Q \, | \, \upalpha\,\cup\,\upbeta \, | \, \upalpha;\,\upbeta \, | \, \upalpha^* 
\end{equation*}
A summary of the program statements in HP and their informal semantics~\cite{platzer2018logical}, is given in Table~\ref{tab:hp}.

\begin{table}[!ht]\small
    \caption{Statements of HPs~\cite{platzer2018logical}. $P,Q$ are first-order formulas, $\upalpha,\upbeta$ are HPs.}
    \label{tab:hp}
    \centering
    \begin{tabularx}{\linewidth}{p{6em} X}
         \hline 
         Statement & \multicolumn{1}{c}{Effect} \\
         \hline
         $x\coloneqq e$ & discrete assignment of the current value of term $e$ to variable $x$\\
         $x\coloneqq *$ & nondeterministic assignment of an arbitrary real number to $x$\\
         $?P$ & continue if first-order formula $P$ holds in the current state, abort otherwise\\
         $x'=f(x)\,\&\,Q$ & follow differential equation $x'=f(x)$ within evolution domain $Q$ for any duration\\
         $\upalpha \cup \upbeta$  & nondeterministic choice, follow either $\upalpha$ or $\upbeta$ \\
         $\upalpha;\,\upbeta$ & sequential composition where $\upbeta$ starts after $\upalpha$ finishes\\
         $\upalpha^*$ & nondeterministic repetition, repeat $\upalpha$ $n$ times for any $n \in \mathbb{N}_0$ \\
         \hline
    \end{tabularx}
\end{table}

Each HP $\upalpha$ is semantically interpreted as a reachability relation $\uprho(\upalpha) \subseteq \stateset \times \stateset$, where $\stateset$ is the set of states and a state $s \in \stateset$ is defined as a mapping from the set of variables to $\real$. The \emph{test} action $?P$ has no effect in a state where $P$ is true. If however $P$ is false when $?P$ is executed, then the current execution of the HP \emph{aborts}, meaning that the entire current execution is removed from the set of possible behaviors of the HP. 
Test actions are often used together with non-deterministic assignment, like $a_n\coloneqq *;\; ?\left(-\aMinN \le a_n \le \aMaxN\right)$. This expresses that $a_n$ is assigned an arbitrary value, which is then tested to be within the bounds $-\aMinN$ and $\aMaxN$. If the value of $a_n$ is outside those bounds, that branch of execution is aborted.
This can model that some external component/environment chooses $a_n$ to be within the given bounds. Furthermore, test actions can be combined with sequential composition and the choice operator to define \emph{if-statements} from classical programming languages as: 
\begin{equation}\label{eq:ifelse}
    \opif\,(P)\; \upalpha \opelse \upbeta \equiv (?P;\,\upalpha) \cup (?\neg P;\,\upbeta)
\end{equation}

HPs model continuous dynamics as $ x'=f(x)\,\&\,Q$, which describes the \emph{continuous evolution} of $x$ along the differential equation system $x'=f(x)$ for an arbitrary duration (including zero) within the \emph{evolution domain constraint} $Q$. The evolution domain constraint applies bounds on the continuous dynamics and are first-order formulas that restrict the continuous evolution within that bound. $x'$ denotes the time derivative of $x$, where $x$ is a vector of variables and $f(x)$ is a vector of terms of the same dimension.

The nondeterministic actions (assignment $x\coloneqq *$, choice $\upalpha\,\cup\,\upbeta$, and repetition $\upalpha^*$) help address two critical aspects in the safety verification:
\begin{enumerate*}[label=(\roman*)]
    \item they can describe unknown behavior, which is typically the case in modeling the highly unpredictable environment for AD systems;
    \item they can abstract away implementation specific details and thus reduce the dependency of the proof on such details. 
\end{enumerate*} For example, to reason about the correctness of the Safety Controller, the nominal acceleration $a_n$ can be modeled with a nondeterministic assignment together with a test action as described above. Such a model is realistic for any implementation of the Nominal Controller and therefore makes the formal analysis independent of changes in the implementation.

The formulas of \dl include formulas of first-order logic of real arithmetic and the modal operators $[\upalpha]$ and $\langle\upalpha\rangle$ for any HP $\upalpha$~\cite{platzer2012logics,platzer2018logical}. The formulas of \dl are defined by the following grammar ($\upphi,\,\uppsi$ are \dl formulas, $e,\,\Tilde{\!e}$ are terms, $x$ is a variable, $\upalpha$ is a HP):
\begin{equation}\label{eq:dlgrammar}
    \upphi,\,\uppsi \Coloneqq e=\,\Tilde{\!e}\;|\;e\ge \,\Tilde{\!e}\;|\;\neg \upphi\;|\;\upphi \land \uppsi\;|\;\forall x\,\upphi\;|\;[\upalpha]\,\upphi
\end{equation}

The \dl formula $[\upalpha]\,\upphi$ expresses that all non-aborting runs of HP $\upalpha$ (i.e., the runs where all test actions are successful) lead to a state in which the the \dl formula $\upphi$ is true. The \dl formula $\langle\upalpha\rangle\,\upphi$ says that there exists some non-aborting run leading to a state where $\upphi$ is true.  $\langle\upalpha\rangle\,\upphi$ is the dual to $[\upalpha]\,\upphi$, defined as $\langle\upalpha\rangle\,\upphi \equiv \neg[\upalpha]\neg\upphi$. Similarly, operators $>,\le,<,\lor,\to,\leftrightarrow,\exists x$ are defined using combinations of the operators in (\ref{eq:dlgrammar}). 

To specify the correctness of the HP $\upalpha$, we use a \dl formula of the form $\upphi\to[\upalpha]\,\uppsi$, which expresses that if the formula $\upphi$ is true in the initial state, then all (non-aborting) runs of $\upalpha$ only lead to states where formula $\uppsi$ is true. In our context of AD, this can easily be translated to a \dl formula:
\begin{equation} \label{eq:asgu}
(\assume) \to [\,\mdl\,]\,(\mathit{guarantee})
\end{equation}
where \mdl is the HP describing the Decision \& Control module, \emph{init} is the initial condition, and \emph{guarantee} is the safety requirement to be verified. The following sections describe how this \dl formula is further refined and formally proven using \keyx, which implements a verification technique for \dl~\cite{platzer2012logics,platzer2018logical,fulton2015keymaera}.

\section{MODELS AND PROOFS} \label{sec:models}

As discussed in Section~\ref{sec:prob}, we formally analyze the safety of the Decision \& Control module by reasoning about the decision-making in the Safety Controller, and in doing so, formulate the necessary assumptions and invariant conditions to guarantee safety. The objective of the Safety Controller is to calculate a safe acceleration value $a_s$ that always fulfills the safety constraint given by the pair $\langle x_c,v_c\rangle$, the critical position and critical velocity, respectively. The requirement to guarantee safety of the \ego is to not have a velocity higher than $v_c$ at or beyond $x_c$. 

The \ego's (longitudinal)\footnote{Only in-lane scenarios are dealt with, so the terms position, velocity and acceleration describe longitudinal vehicle motion, unless otherwise noted.} motion is described by the continuous time kinematic equations:
\begin{equation}\label{eq:vehicleode}
    \frac{dx}{dt} = v,\; \frac{dv}{dt} = a,
\end{equation}
where position $x$ and velocity $v$ are the state variables and the acceleration $a$ is piece-wise constant. Fig.~\ref{fig:sim2} illustrates an example simulation of the ego-vehicle's motion model. 

Furthermore, we consider four system parameters in the models: the maximum and minimum acceleration limits of the 
Nominal Controller given by \aMaxN and \aMinN respectively, the maximum braking limit of the Safety Controller \aMinS, and the sampling time $\Tau$. Since the Nominal Controller is subject to different constraints (e.g. comfort constraints) during nominal driving conditions, its braking capability is less than the vehicle's maximum braking capability. In contrast, the Safety Controller can use the vehicle's maximum braking capability and can therefore brake harder than the Nominal Controller, i.e., $\abs*{\,\aMinS\,} > \abs*{\,\aMinN\,}$. 

In the models~\ref{alg:sm1}-\ref{alg:sm5}, the respective \plant~(\eqref{eq:m1:3},~\eqref{eq:m2:3},~\eqref{eq:m3:3},~\eqref{eq:m4:3}, \eqref{eq:m5:3}) models the continuous dynamics together with the evolution domain constraint. The \ego's motion described in~\eqref{eq:vehicleode} is modeled as $x'=v,\ v'=a_s$ where $a_s$ is the safe acceleration value from the Safety Controller. The evolution domain constraint $v\ge 0$ restricts the continuous evolution to only non-negative velocities. In addition, the \plant models a clock variable $\tau$ that evolves along the differential equation $\tau'=1$, is bound by the domain constraint $\tau \le \Tau$, and is set to $\tau=0$ before every evolution of the \ego's motion. Intuitively, $\tau$ represents the controller execution/sampling time. The constraint $\tau\le \Tau$ accounts for non-periodic sampling as it allows evolution for any amount of time not longer than $\Tau$. In every execution loop of~\eqref{eq:m1:1}, first \env and \ctrl get executed instantaneously, the clock is reset, and then the \plant evolves for at most $\Tau$ time before the loop either repeats again or terminates. 

The rest of this section describes the models and proofs for different design variants of the Decision \& Control module. The models differ in the decision logic used in the Safety Controller. We first prove safety of Model~\ref{alg:sm1}, with a conservative safety controller and the critical velocity $v_c=0$. This model is then generalized to $v_c \ge 0$ (Model~\ref{alg:sm2}). Then, we extend the proofs to models with different threat metrics (models~\ref{alg:sm3}-\ref{alg:sm5}). Furthermore, we also remark on the implications of some modeling choices on the safety argument. 

\subsection{Model 1: Conservative with \texorpdfstring{$v_c = 0$}{vCrit zero}}\label{sec:sm:sr}

In Model~\ref{alg:sm1}, the \dl formula  \eqref{eq:m1:1} refines $\mdl$ of~\eqref{eq:asgu} as a nondeterministic repetition of three sequentially composed HPs to represent a typical controller-plant model: \env~\eqref{eq:m1:7}--\eqref{eq:m1:7a}, \ctrl~\eqref{eq:m1:4}--\eqref{eq:m1:6}, and \plant~\eqref{eq:m1:3}. The \env and \ctrl HPs comprise discrete time sensor inputs and control outputs and are hence modeled as discrete assignments in the HP, while the \plant models the continuous time behavior of the \ego. The nondeterministic repetition modeled by the $^*$ means that the sequential composition (\env; \ctrl; \plant) is repeatedly executed an arbitrary number of times, possibly zero. The \guarantee~\eqref{eq:m1:2} describes the requirement to be verified, i.e., the \ego's velocity is equal to the critical velocity (zero in this particular model) at or beyond the critical position. The \dl formula~\eqref{eq:m1:1} asserts that, here, the forbidden region according to Fig.~\ref{fig:ill1} consists of all violations of \guarantee. 

As described in Section~\ref{sec:prob}, the Safety Controller calculates $a_s$ based on the nominal acceleration $a_n$, the safety constraint $\langle x_c, v_c \rangle$, and the current \ego state given by the position $x$ and velocity $v$. During the driving task, it is logical for the Safety Controller to concur with the Nominal Controller as long as $a_n$ does not compromise safety. Otherwise, the Safety Controller ensures safety through maximal braking. In~\eqref{eq:m1:4}, \ctrl models this decision-making in the Safety Controller by checking if the \safe condition~\eqref{eq:m1:5} is true. If so, $a_n$ does not compromise safety and is assigned to $a_s$. Else, the maximum braking $-\aMinS$ is assigned to $a_s$.

To assess whether $a_n$ is safe or not, a suitable threat metric is needed. In this model, a threat metric in the distance domain, namely the \emph{minimal safe distance}, \msd, is defined in~\eqref{eq:m1:6}. Here, $\msd$ is calculated based on current velocity $v$, time interval $t$, \aMaxN, and \aMinS, using a worst-case assumption for the \ego behavior. During any given time interval $t$, the worst possible behavior of the \ego, w.r.t. the safety constraint, is to accelerate with maximum value \aMaxN. However, such a behavior should be admissible only if the Safety Controller can, after the interval $t$, fulfill the safety constraint by maximal braking \aMinS. This is captured by~\eqref{eq:m1:6}, where \msd is given by the sum of the distance traveled from the current state by accelerating with $\aMaxN$ for time interval $t$ and from there on, the distance traveled by braking with $\aMinS$ until $v=0$ is reached, i.e., the \ego is completely stopped.   

\begin{figure}[!t]
 \removelatexerror
  \begin{HP}[H]
    \DontPrintSemicolon
    \setstretch{0.5}
    \caption{Conservative with $v_c = 0$}
    \label{alg:sm1}
    \SetAlgoLined
    
    \begin{flalign} \label{eq:m1:1}
        (\assume) \to [(\env;\,\ctrl;\,\plant)^*]\,(\guarantee) &&
    \end{flalign}
    \begin{flalign}\label{eq:m1:2}
        \guarantee \triangleq (x \ge x_c \to v = 0) &&
    \end{flalign}
    \begin{flalign}\label{eq:m1:3}
        \plant \triangleq \tau\coloneqq 0;\, x' = v,v' = a_s,\tau' = 1\; \& \;  v \ge 0 \land \tau \le \Tau  &&
    \end{flalign}
    \begin{flalign}\label{eq:m1:4}
            \ctrl \triangleq \opif \, (\safe) \; a_s\coloneqq a_n \; \opelse \; a_s\coloneqq -\aMinS &&
    \end{flalign} 
    \begin{flalign}\label{eq:m1:5}
        \safe \triangleq x_c - x \ge \msdt &&
    \end{flalign}
    \begin{flalign}\label{eq:m1:6}
         \msd \triangleq vt + \frac{\aMaxN\,t^2}{2} + \frac{\left(v+\aMaxN\,t\right)^2}{2\aMinS} && 
    \end{flalign}
    \begin{flalign}
        \env \triangleq &\:x_c\coloneqq*;\,?\,\Big(x_c-x \ge \msdi \Big);& \label{eq:m1:7}\\ 
                          &a_n\coloneqq *;\; ?\left(-\aMinN \le a_n \le \aMaxN\right) &\label{eq:m1:7a}
    \end{flalign}
    \begin{flalign}\label{eq:m1:8}
        \assume \triangleq &\:\aMinS > 0 \land \aMaxN > 0 \land \aMinN > 0 \land \Tau>0 &\nonumber\\ 
                          &\land v\ge 0 \land x_c-x \ge \msdi &
    \end{flalign}
  \end{HP}
\end{figure}

The distance traveled (change in position \pos) and the change in velocity \vel due to a constant acceleration \acc during the time interval \delt can be computed from the solution to the differential equations~\eqref{eq:vehicleode} as (for initial values $x_0$ and $v_0$): 
\begin{align}
    & x\,(t) = x_0 + v_0t + \frac{a t^2}{2} \label{eq:solx} & \\ 
    & v\,(t) = v_0 + a t \label{eq:solv} & 
\end{align}

The condition \safe~\eqref{eq:m1:5} checks whether the distance between the critical position $x_c$ and current position \pos is at least \msdt, i.e., the minimal safe distance for one execution cycle $\Tau$, the maximum time interval between two decisions allowed by the model. 

The HP \env models the behavior of the Situation Assessment (gives $\langle x_c, v_c \rangle$) and the Nominal Controller (gives $a_n$). Since $v_c = 0$ in this simplified case, we only consider $x_c$ in \env. While it is desirable to prove that the decision-making~\eqref{eq:m1:5} fulfills \guarantee for all possible values of $x_c$, it follows from Fig.~\ref{fig:ill1} and the discussion in Section~\ref{sec:intro} that such a proof cannot be obtained. For instance, no controller can guarantee safety from an initial state that already violates the safety constraint. However, we should be able to fulfill \guarantee in all behaviors where it is practically feasible to act in a safe manner. The formula~\eqref{eq:m1:7} describes this intuition with a nondeterministic assignment followed by a test action. The test $?\,\Big(x_c-x \ge \msdi \Big)$ in~\eqref{eq:m1:7} only admits behaviors where the distance between $x_c$ and $x$ is at least the minimal safety distance for zero duration \msdi, i.e., the Safety Controller can fulfill the safety constraint by maximal braking from the current state. The inequality $x_c - x \ge \msdi$ characterizes the admissible region (see Fig.~\ref{fig:ill1}) for this model. Similarly,~\eqref{eq:m1:7a} describes the behavior of the Nominal Controller, where it can nondeterministically output any value within the bounds, i.e., $\left(-\aMinN \le a_n \le \aMaxN\right)$.

The formula \assume~\eqref{eq:m1:8} specifies the initial conditions for Model~\ref{alg:sm1}; the four symbolic system parameters are positive, the velocity $v$ is non-negative, and that the system starts within the admissible region $x_c - x \ge \msdi$. Fig.~\ref{fig:sim2} shows a simulation of the \ego's motion with Model~\ref{alg:sm1}.

\begin{figure}[thpb] 
\centering
\includegraphics[width=0.9\linewidth,keepaspectratio]{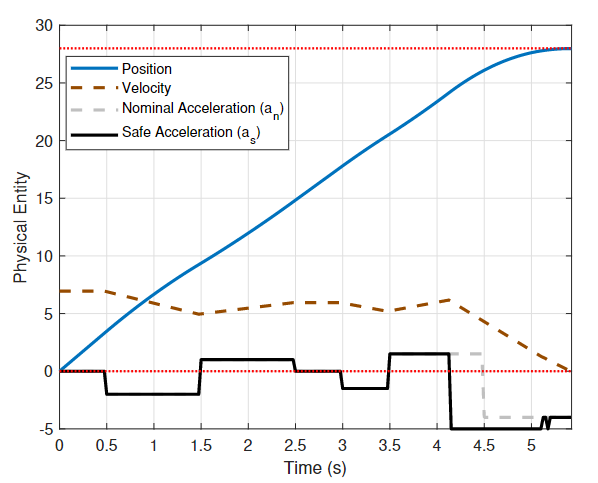}
\caption{An example simulation of ego-vehicle's motion with Model~\ref{alg:sm1}. $x_c = \SI{28}{\metre}$ and $v_c = \SI{0}{\metre/\second}$ indicated by the red dotted line. $a_s$ follows $a_n$ until $\SI{4.15}{\second}$ at which point the safety controller decides $a_n$ is no longer safe and brakes with $a_s=-\aMinS=\SI{-5}{\metre/\second^2}$.}
\label{fig:sim2}
\end{figure}

To prove the \dl formula~\eqref{eq:m1:1}, we use the interactive theorem prover \keyx, which takes a \dl formula as an input and proves it by successively decomposing it into several sub-goals according to the proof rules of \dl~\cite{platzer2012logics,platzer2018logical}. To handle loops, \keyx uses invariants to inductively reason about all (non-aborting) executions of the loop through the loop invariant proof rule~\cite{platzer2018logical}. The loop invariant rule uses some (inductive) loop invariant $\upzeta$ to prove~\eqref{eq:m1:1} by proving three separate formulas: \begin{enumerate}[label=(\roman*)]
    \item $(\assume) \to \upzeta$
    \item $\upzeta \to [\env;\,\ctrl;\,\plant]\,\upzeta$
    \item $\upzeta \to (\guarantee)$.
\end{enumerate}

\begin{theorem}\label{th:1}
Model $\mdl_1$ for the Decision \& Control module described by~\eqref{eq:m1:3}-\eqref{eq:m1:7a} guarantees to provide a safe acceleration request with respect to the safety constraint $\langle x_c,v_c\rangle$, with $v_c=0$, as expressed by the \dl formula~\eqref{eq:m1:1}.
\end{theorem}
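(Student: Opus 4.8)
The plan is to apply the loop invariant rule stated above, taking the \emph{admissible region} condition itself as the invariant and putting essentially all the effort into the inductive step. Concretely, I would use
$$\upzeta \;\triangleq\; \aMinS>0 \land \aMaxN>0 \land \aMinN>0 \land \Tau>0 \land v\ge 0 \land x_c-x \ge \msdi,$$
where $\msdi=\tfrac{v^2}{2\aMinS}$ is exactly the distance needed to brake to a standstill with deceleration $\aMinS$ from the current velocity. This is the inequality the paper already identifies as characterizing the admissible region in Fig.~\ref{fig:ill1}, and it is the natural candidate precisely because it is weak enough to follow from \assume yet strong enough to entail \guarantee.

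Subgoals (i) and (iii) are then immediate. For (i), $(\assume)\to\upzeta$ holds because every conjunct of $\upzeta$ appears verbatim in \assume~\eqref{eq:m1:8}. For (iii), $\upzeta\to(\guarantee)$ is pure real arithmetic: if $x\ge x_c$ then $x_c-x\le 0$, whereas $\msdi=\tfrac{v^2}{2\aMinS}\ge 0$ since $\aMinS>0$; together with $x_c-x\ge\msdi$ this forces $\tfrac{v^2}{2\aMinS}=0$ and hence $v=0$, which is exactly \guarantee.

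The inductive step (ii), $\upzeta\to[\env;\,\ctrl;\,\plant]\,\upzeta$, is where the work lies, and I would discharge it by symbolic execution of the composed program. The nondeterministic assignments and tests in \env leave $x$, $v$ and the parameters unchanged while re-establishing $x_c-x\ge\msdi$ for the fresh $x_c$ and bounding the new $a_n$ by $-\aMinN\le a_n\le\aMaxN$; the if-statement in \ctrl~\eqref{eq:m1:4} induces a case split on \safe; and the plant, being a linear ODE, is replaced by its closed-form solution~\eqref{eq:solx}--\eqref{eq:solv}, introducing a fresh duration $t$ universally quantified over $[0,\Tau]$, with $v\ge 0$ preserved at the endpoint by the evolution domain constraint. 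In the braking branch ($a_s=-\aMinS$) the goal collapses cleanly: substituting the solution, the target $x_c-x_{\mathrm{end}}\ge\tfrac{v_{\mathrm{end}}^2}{2\aMinS}$ with $v_{\mathrm{end}}=v-\aMinS t$ reduces algebraically to the hypothesis $x_c-x\ge\tfrac{v^2}{2\aMinS}$, reflecting the physical fact that the stopping distance is invariant under constant maximal braking.

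The safe branch ($a_s=a_n$) is the main obstacle. Here I have $x_c-x\ge\msdt=g(\Tau,\aMaxN)$ from \safe, writing $g(t,a)\triangleq vt+\tfrac{a t^2}{2}+\tfrac{(v+at)^2}{2\aMinS}$, and must prove $x_c-x\ge g(t,a_n)$, the analogous stopping distance for the \emph{actual} acceleration $a_n$ over the \emph{actual} elapsed time $t$. It therefore suffices to show the worst-case domination $g(t,a_n)\le g(\Tau,\aMaxN)$ for all $t\in[0,\Tau]$, $a_n\le\aMaxN$, $v\ge 0$ and $v+a_n t\ge 0$. I would argue this by monotonicity: since $\partial_a g=\tfrac{t^2}{2}+\tfrac{(v+at)t}{\aMinS}\ge 0$ whenever $v+at\ge 0$, raising $a_n$ to $\aMaxN$ only increases $g$, and since $\partial_t g(t,\aMaxN)=(v+\aMaxN t)\bigl(1+\tfrac{\aMaxN}{\aMinS}\bigr)\ge 0$, extending $t$ to $\Tau$ increases it further. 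In \keyx this reduces, after the ODE solve, to a single quantified nonlinear real-arithmetic formula handed to quantifier elimination; the products and the quotients by $\aMinS$ make that QE step the computational bottleneck, and its soundness relies on $\aMinS>0$ and $v\ge 0$ being carried in $\upzeta$ so that the required sign information is available.
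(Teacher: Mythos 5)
Your proposal is correct and follows essentially the same route as the paper: the paper's proof is a KeYmaera\,X mechanization using exactly the loop invariant $\upzeta \equiv x_c - x \ge \opmsd(0)$, which is the core of your invariant (your extra conjuncts on the parameters and $v\ge 0$ are harmless bookkeeping that the mechanized proof recovers from the constant context and the evolution domain constraint). Your symbolic execution of the loop body — the braking branch collapsing to invariance of the stopping distance, and the safe branch discharged by monotonicity of $\opmsd$ in $a$ and $t$ — is precisely the real-arithmetic reasoning that the tool carries out after solving the ODE.
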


\begin{proof}
Theorem~\ref{th:1} is proved in~\keyx. The proof uses the loop invariant $\upzeta \equiv x_c-x \ge \msdi$~\cite{selvarajgithub}.
\end{proof}

\begin{remark}
Though only $v_c = 0$ is considered in Model~\ref{alg:sm1}, the use of nondeterminism in the model enables a safety proof that covers a wide variety of designs. For instance, the nondeterministic assignments in~\eqref{eq:m1:7} and~\eqref{eq:m1:7a} make the proof independent of the implementation of the Situation Assessment and the Nominal Controller (Fig.~\ref{fig:funcarch}). Furthermore, symbolic bounds on the system parameters make the proof cover infinitely many design variants. 
\end{remark}

\subsection{Model 2: Conservative with \texorpdfstring{$v_c \geq 0$}{vCrit larger or equal to zero}}\label{sec:sm:gr}

Model~\ref{alg:sm2} extends Model~\ref{alg:sm1} to allow $v_c \geq 0$. This change is described by modifications to the \guarantee~\eqref{eq:m2:2}, \env~\eqref{eq:m2:7}, and \assume~\eqref{eq:m2:8} to include $v_c$. Similarly, the formula for the minimal safety distance $\msd$~\eqref{eq:m2:6} is adjusted to reflect the generic case while following the same worst-case reasoning as Model~\ref{alg:sm1}. Here, \msd is given by the sum of the distance traveled from the current state by accelerating with $\aMaxN$ for time interval $t$ and from there on, the distance traveled to reach $v_c \geq 0$ (instead of $v_c=0$ as in Model~\ref{alg:sm1}). 

\begin{theorem}\label{th:2}
Model $\mdl_2$ for the Decision \& Control module, described by~\eqref{eq:m2:3}-\eqref{eq:m2:7a} guarantees to provide a safe acceleration request with respect to the safety constraint $\langle x_c,v_c\rangle$ as expressed by the \dl formula~\eqref{eq:m2:1}.
\end{theorem}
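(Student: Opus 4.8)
The plan is to mirror the proof of Theorem~\ref{th:1}, reusing the loop invariant proof rule in \keyx with the invariant generalized to account for $v_c \ge 0$. The natural candidate is $\upzeta \equiv x_c - x \ge \msdi$, where now $\msdi$ is the Model~\ref{alg:sm2} minimal safe distance \eqref{eq:m2:6} evaluated at $t=0$, that is $\frac{v^2 - v_c^2}{2\aMinS}$. As in Theorem~\ref{th:1}, I would discharge the three subgoals $(\assume)\to\upzeta$, $\upzeta\to[\env;\,\ctrl;\,\plant]\,\upzeta$, and $\upzeta\to(\guarantee)$ separately, the only genuine change being that the braking and safety reasoning must now track down to $v_c$ rather than to $0$.

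The two outer subgoals are routine. The base case $(\assume)\to\upzeta$ holds because \assume~\eqref{eq:m2:8} explicitly contains the conjunct $x_c - x \ge \msdi$. For $\upzeta\to(\guarantee)$, note that when $x \ge x_c$ we have $x_c - x \le 0$, so $\upzeta$ forces $\frac{v^2 - v_c^2}{2\aMinS}\le 0$; since $\aMinS>0$ this gives $v^2 \le v_c^2$, and with $v\ge 0$ and $v_c\ge 0$ (which \env must preserve when reassigning $\langle x_c,v_c\rangle$, via a test $v_c\ge 0$) we conclude $v\le v_c$, which is exactly \guarantee~\eqref{eq:m2:2}.

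The inductive step is the crux and proceeds by case analysis on the two branches of \ctrl. After \env, its test re-establishes $x_c - x \ge \msdi$ for the freshly chosen $\langle x_c,v_c\rangle$, so I may assume this bound at the entry to \ctrl. In the braking branch ($a_s=-\aMinS$), I would substitute the closed-form solutions \eqref{eq:solx}--\eqref{eq:solv} with $a=-\aMinS$ into the post-state: the motion terms $-vt+\frac{\aMinS t^2}{2}$ appear identically on the two sides of the required inequality $x_c - x_{\text{new}} \ge \frac{(v-\aMinS t)^2 - v_c^2}{2\aMinS}$ and cancel, reducing the goal to the incoming bound $x_c - x \ge \msdi$. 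Hence $\upzeta$ is preserved for \emph{every} evolution duration $t$, even if $v$ drops below $v_c$. In the nominal branch ($a_s=a_n$, with $-\aMinN\le a_n\le\aMaxN$), the \safe condition~\eqref{eq:m1:5}, now carrying the Model~\ref{alg:sm2} distance \eqref{eq:m2:6}, gives $x_c-x\ge\msdt$, and I must show the post-state satisfies $x_c-x_{\text{new}}\ge\frac{(v+a_n t)^2 - v_c^2}{2\aMinS}$ for all $a_n\le\aMaxN$ and all $t\le\Tau$.

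The main obstacle is precisely this last inequality, which hinges on a monotonicity argument: writing $g(a,t)=vt+\frac{a t^2}{2}+\frac{(v+at)^2 - v_c^2}{2\aMinS}$, one checks $\partial g/\partial a \ge 0$ and $\partial g/\partial t=(v+at)\bigl(1+\tfrac{a}{\aMinS}\bigr)\ge 0$ over the admissible range, using $v+at\ge 0$ (guaranteed by the evolution domain $v\ge 0$) and $a\ge-\aMinN>-\aMinS$ (from $\abs*{\,\aMinS\,}>\abs*{\,\aMinN\,}$, which makes the second factor positive). Thus $g$ attains its maximum at $a=\aMaxN,\ t=\Tau$, where $g(\aMaxN,\Tau)=\msdt\le x_c-x$, closing the case. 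In \keyx this monotonicity is absorbed into the real-arithmetic decision procedure once the continuous evolution is replaced by its closed-form solution and the evolution time is universally quantified, so the practical difficulty is mainly guiding the branch split and confirming that the $-v_c^2$ terms, which now allow $\msdi$ to be negative, do not destabilize the quantifier-elimination goals.
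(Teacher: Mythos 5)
Your overall strategy is the same as the paper's: apply the loop invariant rule in \keyx, discharge the three subgoals, and handle the inductive step by a case split on the two branches of \ctrl using the closed-form solutions \eqref{eq:solx}--\eqref{eq:solv}; your algebra in both the braking branch (cancellation down to the incoming bound) and the nominal branch (monotonicity in $a$ and $t$) is sound. However, there is a genuine gap in your choice of invariant. You take $\upzeta \equiv x_c - x \ge \msdi$, whereas the paper's invariant is $\upzeta \equiv v_c \ge 0 \land x_c - x \ge \msdi$, and the extra conjunct is not cosmetic: it is needed to make your third subgoal valid. In the use case $\upzeta \to (\guarantee)$, the only facts available about variables bound in the loop are those stated in $\upzeta$ itself (plus constant facts about parameters such as $\aMinS > 0$). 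Since $v_c$ is reassigned by \env inside the loop, the property $v_c \ge 0$ established by \env's test is exactly the kind of information that is \emph{not} visible in the use-case premise --- your parenthetical appeal to ``which \env must preserve \ldots via a test $v_c \ge 0$'' imports knowledge of the loop body into a premise of the rule that, by design, sees only the invariant. Without $v_c \ge 0$, the implication $x_c - x \ge \frac{v^2 - v_c^2}{2\aMinS} \to (x \ge x_c \to v \le v_c)$ is falsifiable: take $\aMinS = 1$, $x = x_c$, $v = 1$, $v_c = -2$; the hypothesis holds ($0 \ge -3/2$) but $v \le v_c$ fails. Quantifier elimination would refute this goal, and the \keyx proof would not close.

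The repair is precisely what the paper does: conjoin $v_c \ge 0$ to the invariant. It holds initially by \assume~\eqref{eq:m2:8}, it is trivially preserved (\env re-tests it, and neither \ctrl nor \plant writes $v_c$), and in the use case it turns $v^2 \le v_c^2$ into $v \le v_c$ (indeed $v \le \abs*{v} \le \abs*{v_c} = v_c$, so you do not even need $v \ge 0$ there). With that one-conjunct strengthening, the remainder of your argument goes through as you describe.
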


\begin{proof}
Theorem~\ref{th:2} is proved in~\keyx using the loop invariant $\upzeta \equiv v_c\ge 0\,\land \, x_c-x \ge \msdi$~\cite{selvarajgithub}.
\end{proof}

\begin{figure}[!t]
 \removelatexerror
  \begin{HP}[H]
    \DontPrintSemicolon
    \setstretch{0.5}
    \caption{Conservative with $v_c\geq 0$}
    \label{alg:sm2}
    \SetAlgoLined
    
    \begin{flalign} \label{eq:m2:1}
        (\assume) \to [(\env;\,\ctrl;\,\plant)^*]\,(\guarantee) &&
    \end{flalign}
    \begin{flalign}\label{eq:m2:2}
        \guarantee \triangleq (x \ge x_c \to v \le v_c) &&
    \end{flalign}
    \begin{flalign}\label{eq:m2:3}
        \plant \triangleq \tau\coloneqq 0;\, x' = v,v' = a_s,\tau' = 1\; \& \;  v \ge 0 \land \tau \le \Tau  &&
    \end{flalign}
    \begin{flalign}\label{eq:m2:4}
            \ctrl \triangleq \opif \, (\safe) \; a_s\coloneqq a_n \; \opelse \; a_s\coloneqq -\aMinS &&
    \end{flalign} 
    \begin{flalign}\label{eq:m2:5}
        \safe \triangleq x_c - x \ge \msdt &&
    \end{flalign}
    \begin{flalign}\label{eq:m2:6}
         \msd \triangleq vt + \frac{\aMaxN t^2}{2} + \frac{\left(v+\aMaxN t\right)^2-v_c^2}{2\aMinS} && 
    \end{flalign}
    \begin{flalign}
        \env \triangleq &\:x_c\coloneqq*;\,v_c\coloneqq*; &\nonumber \\
                       &?\,\Big(v_c\ge 0 \land x_c-x \ge \msdi \Big);&\label{eq:m2:7}\\ 
                       &a_n\coloneqq *;\; ?\left(-\aMinN \le a_n \le \aMaxN\right) &\label{eq:m2:7a}
    \end{flalign}
    \begin{flalign}\label{eq:m2:8}
        \assume \triangleq &\:\aMinS > 0 \land \aMaxN > 0 \land \aMinN > 0 \land \Tau>0 &\nonumber\\ 
                          &\land v\ge 0 \land v_c \ge 0 \land x_c-x \ge \msdi &
    \end{flalign}    
  \end{HP}
\end{figure}

\subsection{Model 3: Permissive with \texorpdfstring{$v_c=0$}{vCrit zero}}\label{sec:sm:sp}

The decision-making in the two models discussed so far is based on a threat metric $\msd$ determined from worst-case assumption of the \ego behavior. While provably safe, such worst-case reasoning often leads to a conservative design. The \emph{permissive} models described by Model~\ref{alg:sm3} and Model~\ref{alg:sm4} are based on a threat metric that relaxes the worst-case assumption. This section discusses the case where $v_c = 0$ and Section~\ref{sec:sm:gp} extends it for the generic case $v_c\geq 0$. 

\begin{figure}[!t]
 \removelatexerror
  \begin{HP}[H]
    \DontPrintSemicolon
    \setstretch{0.5}
    \caption{Permissive with $v_c = 0$}    \label{alg:sm3}
    \SetAlgoLined
    
    \begin{flalign} \label{eq:m3:1}
        (\assume) \to [(env;\,ctrl;\,plant)^*]\,(guarantee) &&
    \end{flalign}
    \begin{flalign}\label{eq:m3:2}
        guarantee \triangleq (x \ge x_c \to v = 0) &&
    \end{flalign}
    \begin{flalign}\label{eq:m3:3}
        plant \triangleq \tau\coloneqq 0;\, x' = v,v' = a_s,\tau' = 1\; \& \;  v \ge 0 \land \tau \le \Tau  &&
    \end{flalign}    
    \begin{flalign}\label{eq:m3:4}
            \ctrl \triangleq \opif \, (\safe) \; a_s\coloneqq a_n \; \opelse \; a_s\coloneqq -\aMinS &&
    \end{flalign} 
    \begin{flalign}\label{eq:m3:5}
        \safe \triangleq x_c - x \ge \msdt &&
    \end{flalign}
    \begin{flalign}\label{eq:m3:6}
         \msd \triangleq  &
         \begin{cases}
                                vt + \dfrac{a_n t^2}{2} + \dfrac{\left(v+a_n t\right)^2}{2\aMinS} & \text{if $v+a_n t\ge 0$}\\
                                -\dfrac{v^2}{2a_n} & \text{otherwise}
                            \end{cases}&
    \end{flalign}
    \begin{flalign}
        env \triangleq &\:x_c\coloneqq*;\,?\,\Big(x_c-x \ge \msdi \Big);&\label{eq:m3:7}\\ 
                          &a_n\coloneqq *;\; ?\left(-\aMinN \le a_n \le \aMaxN\right) &\label{eq:m3:7a}
    \end{flalign}
    \begin{flalign}\label{eq:m3:7b}
        \msdi \triangleq \frac{v^2}{2\aMinS} &&
    \end{flalign}       
    \begin{flalign}\label{eq:m3:8}
        \assume \triangleq &\:\aMinS > 0 \land \aMaxN > 0 \land \aMinN > 0 \land \Tau>0 &\nonumber\\ 
                          &\land v\ge 0 \land x_c-x \ge \msdi &
    \end{flalign}    
  \end{HP}
\end{figure}

In comparison to Model~\ref{alg:sm1}, the threat metric $\msd$ used to assess $a_n$ is changed according to the new relaxed assumption. Intuitively, while calculating $\msd$, the worst-case behavior of accelerating with $\aMaxN$ for time interval $t$ is replaced with the actual behavior of accelerating with the requested $a_n$ for the time interval $t$. This change is reflected in Model~\ref{alg:sm3} by replacing $\aMaxN$ with $a_n$ in~\eqref{eq:m1:6} to give:
\begin{equation}\label{eq:wrongmst}
    \msd \triangleq vt + \frac{a_n t^2}{2} + \frac{\left(v+a_n t\right)^2}{2\aMinS}
\end{equation}

The Safety Controller decides whether $a_n$ is \safe by checking if $x_c - x \ge \msdt$ is true. From~\eqref{eq:wrongmst}, note that \msdt is given by the sum of the distance traveled from the current state due to $a_n$ for time interval $\Tau$, and from there on the distance traveled by braking with $\aMinS$ until zero velocity is reached. Since $-\aMinN \le a_n \le \aMaxN$, the \ego can either accelerate or brake during a given time interval depending on the requested $a_n$. However, when the \ego brakes $(-\aMinN \le a_n < 0)$, it is possible that zero velocity is reached during interval $\Tau$ but not necessarily after $\Tau$ time. In such edge cases, \msdt determined from \eqref{eq:wrongmst} will lead to incorrect and unsafe decisions and therefore is not sufficient to fulfill the \guarantee~\eqref{eq:m3:2}.

An example simulation of such an edge case is given in Fig.~\ref{fig:sim3}. Though the safety constraint is violated during time interval $\Tau$ for some requested $a_n < 0$, the test condition $x_c - x \ge \msdt$ incorrectly decides the requested $a_n$ to be \safe using \msdt determined from~\eqref{eq:wrongmst}. To account for such scenarios, $\msd$ is split into two cases depending on whether the velocity after time interval $t$ due to $a_n$ is non-negative or not, as formulated in~\eqref{eq:m3:6}. In the second case, when $(v+a_n t) < 0$ due to $-\aMinN \le a_n < 0$, a choice of $a_n$ is considered \safe if braking with $a_n$ to a complete stop is sufficient to satisfy the constraint, as described by $x_c - x \ge -\frac{v^2}{2a_n}$. Else, $\aMinS$ is set as the safe acceleration value. 

\begin{figure}[thpb] 
\centering
\includegraphics[width=0.9\linewidth,keepaspectratio]{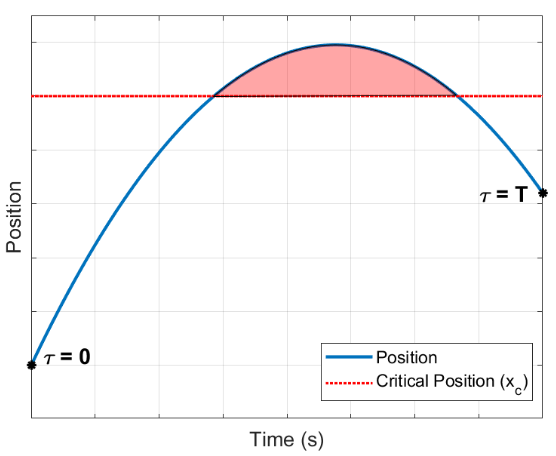}
\caption{An illustration of incorrect decision-making. At $\tau=0$, a decision function using~\eqref{eq:wrongmst} incorrectly decides the choice of $a_n$ to be safe. Though the constraint is not violated at $\tau=\Tau$, it is violated in the shaded region before $\tau=\Tau$. Such edge cases are resolved by the case-split in~\eqref{eq:m3:6}.}
\label{fig:sim3}
\end{figure}

\begin{theorem}\label{th:3}
Model $\mdl_3$ for the Decision \& Control module, described by~\eqref{eq:m3:3}-\eqref{eq:m3:7b} guarantees to provide a safe acceleration request with respect to the safety constraint $\langle x_c,v_c\rangle$, with $v_c=0$, as expressed by the \dl formula~\eqref{eq:m3:1}.
\end{theorem}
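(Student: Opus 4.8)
The plan is to reuse the loop-invariant argument of Theorem~\ref{th:1}, now instantiated for the permissive metric~\eqref{eq:m3:6}. I would prove~\eqref{eq:m3:1} in \keyx via the loop invariant rule with
\begin{equation*}
    \upzeta \;\equiv\; x_c - x \ge \msdi, \qquad \text{where } \msdi = \frac{v^2}{2\aMinS}
\end{equation*}
by~\eqref{eq:m3:7b}; geometrically, the \ego is always at least its full-braking stopping distance from $x_c$ (the constant parameter bounds of \assume~\eqref{eq:m3:8} are unchanged by the loop and remain available). The base case $(\assume)\to\upzeta$ holds because \assume contains $x_c-x\ge\msdi$ verbatim. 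The use case $\upzeta\to(\guarantee)$ is pure arithmetic: if $x\ge x_c$ then $x_c-x\le 0$, while $\upzeta$ and $\aMinS>0$ force $v^2/(2\aMinS)\ge 0$, so $0\ge x_c-x\ge v^2/(2\aMinS)\ge 0$ collapses to $v=0$.

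The substance is the inductive step $\upzeta\to[\env;\,\ctrl;\,\plant]\,\upzeta$. Symbolically executing \env~\eqref{eq:m3:7} re-establishes $x_c-x\ge\msdi$ and bounds $-\aMinN\le a_n\le\aMaxN$; I would then branch on \ctrl~\eqref{eq:m3:4} and, for the \plant, substitute the closed-form solution~\eqref{eq:solx}--\eqref{eq:solv}, so that the evolution-domain constraint $v\ge 0\land\tau\le\Tau$ reduces the postcondition to: $\upzeta$ holds at every reachable time $t\in[0,\Tau]$ with $v(t)\ge 0$. In the braking branch $a_s=-\aMinS$, the quantity $x_c-x(t)-v(t)^2/(2\aMinS)$ reduces, after cancellation, exactly to the starting slack $x_c-x-v^2/(2\aMinS)\ge 0$, so $\upzeta$ is preserved, the braking distance shrinking in lockstep with $x_c-x$.

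The delicate branch is $a_s=a_n$, where the guard \safe~\eqref{eq:m3:5} gives only $x_c-x\ge\msdt$, i.e.\ slack at the \emph{horizon} $\Tau$. The key lemma is that the time-parametrized distance $\msd$ is non-decreasing over the relevant interval, so horizon slack dominates the slack demanded at every intermediate $t$. Differentiating the first case of~\eqref{eq:m3:6} yields $(v+a_n t)(\aMinS+a_n)/\aMinS$; the brake-harder assumption $\abs*{\,\aMinS\,}>\abs*{\,\aMinN\,}$ with $a_n\ge-\aMinN$ gives $\aMinS+a_n>0$, so the derivative has the sign of $v+a_n t$. When $v+a_n t\ge 0$ throughout $[0,\Tau]$ (first case), $\msd$ is maximal at $\Tau$ and $x_c-x\ge\msdt$ propagates to all $t$; when $v+a_n t<0$ (second case), the domain constraint $v\ge 0$ halts the \plant at the stopping time $t^\ast=-v/a_n<\Tau$, where $\msd$ equals exactly $-v^2/(2a_n)$, and monotonicity on $[0,t^\ast]$ again carries the guard to all reachable $t$.

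I expect the main obstacle to be precisely this intermediate-time reasoning in the inductive step: it does not suffice that the guard holds at the sampling instant $\Tau$; one must guarantee $\upzeta$ at \emph{every} instant the \plant may stop, including before $\Tau$. This is the edge case of Fig.~\ref{fig:sim3} and is exactly what necessitates the case-split in~\eqref{eq:m3:6}. Discharging the resulting nonlinear real-arithmetic goals in \keyx---after solving the ODE and instantiating the evolution-domain quantifier over $t$---is the crux, whereas the base case, the use case, and the braking branch are routine.
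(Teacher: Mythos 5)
Your proof skeleton is exactly the paper's: Theorem~\ref{th:3} is proved in \keyx by the loop invariant rule with $\upzeta \equiv x_c - x \ge \msdi$, where $\msdi = \frac{v^2}{2\aMinS}$ by~\eqref{eq:m3:7b}, and your base case, use case, and braking branch are all correct. However, the inductive step for the branch $a_s = a_n$ has a genuine gap. Your key lemma---that \msd is non-decreasing over the relevant interval---rests on $\aMinS + a_n > 0$, which you derive from the brake-harder assumption $\abs*{\,\aMinS\,} > \abs*{\,\aMinN\,}$. That assumption is \emph{not} available in Model~\ref{alg:sm3}: \assume~\eqref{eq:m3:8} contains no relation between \aMinN and \aMinS, and the paper explicitly notes in Section~\ref{sec:sm:5} that no such relation is required in Models~\ref{alg:sm1}--\ref{alg:sm4} (only Model~\ref{alg:sm5} assumes $\aMinN < \aMinS$). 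Under~\eqref{eq:m3:8} alone, \env~\eqref{eq:m3:7a} permits $a_n < -\aMinS$, i.e.\ the Nominal Controller may brake harder than the Safety Controller; then $\aMinS + a_n < 0$, the derivative $(v + a_n t)(\aMinS + a_n)/\aMinS$ is non-positive while $v + a_n t \ge 0$, so \msd is non-\emph{increasing} and slack at the horizon (or at the stop time) does not dominate the intermediate slack. A \keyx proof following your sketch would leave exactly this arithmetic subgoal open.

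The repair is local and needs no added assumption. Work with the slack function $g(t) = x_c - x(t) - v(t)^2/(2\aMinS)$, whose derivative along the nominal branch is $-v(t)(\aMinS + a_n)/\aMinS$. Since $v(t)\ge 0$ is enforced by the evolution domain and $\aMinS>0$, $g$ is monotone on the reachable interval---in one direction or the other depending on the sign of $\aMinS + a_n$---so its minimum is attained at an endpoint. The left endpoint is covered by the test in \env~\eqref{eq:m3:7}, which re-establishes $g(0) = x_c - x - \msdi \ge 0$; the right endpoint (time $\Tau$ in the first case of~\eqref{eq:m3:6}, or the stopping time $t^\ast = -v/a_n < \Tau$ in the second case, beyond which the constraint $v\ge 0$ forbids further evolution) is covered by the guard \safe~\eqref{eq:m3:5}, which is equivalent to $g \ge 0$ there. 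Hence $\upzeta$ is preserved for every sign of $\aMinS + a_n$, and the invariant closes as in the paper. Your case split on $v + a_n\Tau$ and your identification of the Fig.~\ref{fig:sim3} intermediate-time edge case as the crux are otherwise exactly right.
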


\begin{proof}
Theorem~\ref{th:3} is proved in~\keyx using the loop invariant $\upzeta \equiv x_c-x \ge \msdi$~\cite{selvarajgithub}.
\end{proof}

\begin{remark}
The edge case encountered in Model~\ref{alg:sm3} highlights the need for reasoning about intermediate states to accurately prove safety \emph{throughout all} executions of the model. In the \plant description~\eqref{eq:m3:3}, the implicit nondeterminism introduced by the domain constraint $\tau \le \Tau$ allows evolution of any duration $\tau$ (including zero) that satisfies the domain constraint. Therefore, Theorem~\ref{th:3} shows that \guarantee is fulfilled throughout all possible executions of the model. However, modifying the \plant~\eqref{eq:m3:3} to include a test as:
\begin{equation}\label{eq:wrongplant}
    \tau\coloneqq 0; x' = v,v' = a_s,\tau' = 1\,\& \,  v \ge 0 \land \tau \le \Tau; ?(\tau=\Tau);
\end{equation}
makes the \dl formula~\eqref{eq:m3:1} provable even with the incorrect \msd~\eqref{eq:wrongmst}, since~\eqref{eq:wrongplant} allows only evolution of exactly $\tau = \Tau$ duration in the model and therefore requires \guarantee to hold in only those states reached at the end of evolution for precisely $\Tau$ duration. 
\end{remark}

\subsection{Model 4: Permissive with \texorpdfstring{$v_c\geq 0$}{vCrit greater or equal to zero}}\label{sec:sm:gp}
The permissive model where $v_c \ge 0$ is similar to Model~\ref{alg:sm2} in Section~\ref{sec:sm:gr} except for the threat metric $\msd$~\eqref{eq:m4:6}. Here, since \msd is determined based on actual behavior and not worst-case behavior of the \ego, it is split into two cases to account for edge cases described in the previous section and illustrated in Fig.~\ref{fig:sim3}.

\begin{theorem}\label{th:4}
Model $\mdl_4$ for the Decision \& Control module, described by~\eqref{eq:m4:3}-\eqref{eq:m4:9} guarantees to provide a safe acceleration request with respect to the safety constraint $\langle x_c,v_c\rangle$ as expressed by the \dl formula~\eqref{eq:m4:1}.
\end{theorem}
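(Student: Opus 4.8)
The plan is to discharge the \dl formula~\eqref{eq:m4:1} in \keyx with the loop invariant rule, following the same recipe as Theorems~\ref{th:1}--\ref{th:3}. Because Model~\ref{alg:sm4} fuses the generalization to $v_c \ge 0$ of Model~\ref{alg:sm2} with the permissive, case-split threat metric of Model~\ref{alg:sm3}, I would combine the two invariants used there and take $\upzeta \equiv v_c \ge 0 \land x_c - x \ge \msdi$, where at $t=0$ the metric~\eqref{eq:m4:6} collapses to the $v_c$-generalized stopping distance $\frac{v^2 - v_c^2}{2\aMinS}$. The rule then splits the goal into (i) $(\assume) \to \upzeta$, (ii) $\upzeta \to [\env;\,\ctrl;\,\plant]\,\upzeta$, and (iii) $\upzeta \to (\guarantee)$.

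Subgoals (i) and (iii) should reduce to routine real-arithmetic. For (i), the conjuncts $v_c \ge 0$ and $x_c - x \ge \msdi$ already appear verbatim in \assume (the $v_c$-augmented analogue of~\eqref{eq:m2:8}), so the implication is immediate. For (iii), assuming $x \ge x_c$ gives $x_c - x \le 0$, whence $\frac{v^2 - v_c^2}{2\aMinS} \le 0$; as $\aMinS > 0$ and $v, v_c \ge 0$, this forces $v \le v_c$, which is exactly \guarantee~\eqref{eq:m4:2}.

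The inductive step (ii) carries the real content. I would symbolically execute the body in sequence: \env draws fresh nondeterministic $x_c, v_c, a_n$ and imposes the tests $v_c \ge 0 \land x_c - x \ge \msdi$ and $-\aMinN \le a_n \le \aMaxN$; \ctrl branches on \safe into $a_s \coloneqq a_n$ and $a_s \coloneqq -\aMinS$; and \plant solves the kinematics so that after any duration $\tau \in [0,\Tau]$ the new state is given by~\eqref{eq:solx}--\eqref{eq:solv}. In each \ctrl branch I must re-establish $x_c - x \ge \msdi$ at the post-state. Crucially, the evolution domain only imposes $\tau \le \Tau$, so the argument must hold for \emph{every} intermediate $\tau$, not just $\tau = \Tau$; this is exactly the subtlety highlighted in the remark after Theorem~\ref{th:3}, which is why the naive metric~\eqref{eq:wrongmst} fails. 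The case split of~\eqref{eq:m4:6} must be threaded through here: a requested $a_n < 0$ that decelerates the \ego down to $v_c$ before $\Tau$ elapses falls into the second branch of $\msd$, while the accelerating or still-above-$v_c$ coasting case is covered by the first.

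I expect the main obstacle to be precisely this intermediate-state reasoning in the permissive $v_c \ge 0$ setting, where the quantification over the evolution duration interacts both with the two-way case split in $\msd$ and with the two \ctrl branches. This produces several nonlinear arithmetic subgoals carrying $v_c^2$ terms that \keyx must close, and the delicate part is phrasing $\msdi$ and $\upzeta$ so that these subgoals fall to the decision procedure rather than demanding an auxiliary differential invariant or a finer manual case analysis---mirroring the care already required to repair~\eqref{eq:wrongmst} in Model~\ref{alg:sm3}.
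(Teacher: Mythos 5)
Your proposal matches the paper's proof: Theorem~\ref{th:4} is discharged in \keyx via the loop invariant rule with exactly the invariant you chose, $\upzeta \equiv v_c \ge 0 \land x_c - x \ge \msdi$, where $\msdi$ is the $v_c$-generalized stopping distance~\eqref{eq:m4:9}. Your elaboration of the three subgoals and of the intermediate-state reasoning under the domain constraint $\tau \le \Tau$ is simply the detailed working that the paper's one-line proof delegates to \keyx, so the approaches coincide.
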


\begin{proof}
Theorem~\ref{th:4} is proved in~\keyx using the loop invariant $\upzeta \equiv v_c\ge 0\,\land \, x_c-x \ge \msdi$~\cite{selvarajgithub}.
\end{proof}

\begin{figure}[!t]
 \removelatexerror
  \begin{HP}[H]
    \DontPrintSemicolon
    \setstretch{0.5}
    \caption{Permissive with $v_c\geq 0$}
    \label{alg:sm4}
    \SetAlgoLined
    
    \begin{flalign} \label{eq:m4:1}
        (\assume) \to [(env;\,ctrl;\,plant)^*]\,(guarantee) &&
    \end{flalign}
    \begin{flalign}\label{eq:m4:2}
        guarantee \triangleq (x \ge x_c \to v \le v_c) &&
    \end{flalign}
    \begin{flalign}\label{eq:m4:3}
        plant \triangleq \tau\coloneqq 0;\, x' = v,v' = a_s,\tau' = 1\; \& \;  v \ge 0 \land \tau \le \Tau  &&
    \end{flalign}
    \begin{flalign}\label{eq:m4:4}
        \ctrl \triangleq \opif \, (\safe) \; a_s\coloneqq a_n \; \opelse \; a_s\coloneqq -\aMinS &&
    \end{flalign} 
    \begin{flalign}\label{eq:m4:5}
        \safe \triangleq x_c - x \ge \msdt &&
    \end{flalign}
    \begin{flalign}\label{eq:m4:6}
         \msd \triangleq  &\begin{cases}
                                vt + \dfrac{a_n t^2}{2} + \dfrac{\left(v+a_n t\right)^2 - v_c^2}{2\aMinS} & \text{if $v+a_n t\ge 0$}\\[5pt]
                                -\dfrac{v^2}{2a_n} & \text{otherwise}
                            \end{cases}&
    \end{flalign}
    \begin{flalign}
        env \triangleq &\:x_c\coloneqq*;\,v_c\coloneqq*; &\nonumber \\
                       &?\,\Big(v_c\ge 0 \land x_c-x \ge \msdi \Big);&\label{eq:m4:8}\\
                       &a_n\coloneqq *;\; ?\left(-\aMinN \le a_n \le \aMaxN\right) &\label{eq:m4:8a}
    \end{flalign}
    \begin{flalign}\label{eq:m4:9}
        \msdi \triangleq \frac{v^2-v_c^2}{2\aMinS} &&
    \end{flalign}    
    \begin{flalign}\label{eq:m4:10}
        \assume \triangleq &\:\aMinS > 0 \land \aMaxN > 0 \land \aMinN > 0 \land \Tau>0 &\nonumber\\ 
                          &\land v\ge 0 \land v_c \ge 0 \land x_c-x \ge \msdi &
    \end{flalign}    
  \end{HP}
\end{figure}

\subsection{Model 5: Required Acceleration as Threat Metric}\label{sec:sm:5}
In all the previous models, the decision-making is based on minimal safety distance, a threat metric defined in the distance domain. However, it is possible to define threat metrics in other domains like time, acceleration, etc.~\cite{jansson2005collision}. In this section, we show how Model~\ref{alg:sm3} can be reformulated with a threat metric in the acceleration domain \aReq defined as the \emph{longitudinal acceleration required} to fulfill the safety constraint. The decision-making~\eqref{eq:m5:4} is similar to the other models where the \safe condition~\eqref{eq:m5:5} is used to assess $a_n$ by comparing \aReqT to a threshold \ath.

Since the intention here is to have a reformulation of the threat metric in the acceleration domain, \aReqT in~\eqref{eq:m5:6} is defined similar to~\eqref{eq:m3:6} in Model~\ref{alg:sm3} using the same assumption for the \ego  behavior. However, the decision-making in Model~\ref{alg:sm3} and Model~\ref{alg:sm5} differ in the threshold used for the Safety Controller's intervention and this difference is further discussed in Section~\ref{sec:performance}. Furthermore, while no assumption on the relation between \aMinN and \aMinS was required in Models~\ref{alg:sm1}-\ref{alg:sm4}, \assume~\eqref{eq:m5:10} requires $\aMinN < \aMinS$ (or $\aMinN \le \aMinS$) to fulfill \guarantee~\eqref{eq:m5:2} as $\aMinN$ is used as a threshold for comparison in the \safe condition~\eqref{eq:m5:5}.

\begin{figure}[!t]
 \removelatexerror
  \begin{HP}[H]
    \DontPrintSemicolon
    \setstretch{0.5}
    \caption{\aReq as threat metric}
    \label{alg:sm5}
    \SetAlgoLined
    
    \begin{flalign} \label{eq:m5:1}
        (\assume) \to [(env;\,ctrl;\,plant)^*]\,(guarantee) &&
    \end{flalign}
    \begin{flalign}\label{eq:m5:2}
        guarantee \triangleq (x \ge x_c \to v = 0) &&
    \end{flalign}
    \begin{flalign}\label{eq:m5:3}
        plant \triangleq \tau\coloneqq 0;\, x' = v,v' = a_s,\tau' = 1\; \& \;  v \ge 0 \land \tau \le \Tau  &&
    \end{flalign}    
    \begin{flalign}\label{eq:m5:4}
        \ctrl \triangleq \opif \, (\safe) \; a_s\coloneqq a_n \; \opelse \; a_s\coloneqq -\aMinS &&
    \end{flalign} 
    \begin{flalign}\label{eq:m5:5}
        \safe \triangleq \aReqT \ge \ath &&
    \end{flalign}
    \begin{flalign}\label{eq:m5:6}
         \aReqT \triangleq  &\begin{cases} 
                                -\dfrac{\left(v + a_n\Tau\right)^2}{2\left(x_c-x-v\Tau-\dfrac{a_n\Tau^2}{2}\right)} & \text{if $v+a_n\Tau \ge 0$}\\[8pt]
                                -\dfrac{v^2}{2(x_c-x)} & \text{otherwise}
                            \end{cases}&
    \end{flalign}
    \begin{flalign}\label{eq:m5:7}
         \ath \triangleq  &\begin{cases}
                                -\aMinN & \text{if $v+a_n\Tau \ge 0$}\\
                                -a_n &    \text{otherwise}           
                            \end{cases}&
    \end{flalign}
    \begin{flalign}\label{eq:m5:8}
        env \triangleq &\:x_c\coloneqq*;\,?\,\Big(\aReqi \ge -\aMinS \Big);&\\ 
                          &a_n\coloneqq *;\; ?\left(-\aMinN \le a_n \le \aMaxN\right) &
    \end{flalign}
    \begin{flalign}\label{eq:m5:9}
        \aReqi \triangleq -\frac{v^2}{2(x_c-x)} &&
    \end{flalign}   
    \begin{flalign}\label{eq:m5:10}
        \assume \triangleq &\:\aMinS > 0 \land \aMaxN > 0 \land \aMinN > 0 \land \Tau>0 &\nonumber\\ 
                          &\land v\ge 0 \land \aMinN < \aMinS \land \aReqi \ge -\aMinS &
    \end{flalign}
  \end{HP}
\end{figure}

\begin{theorem}\label{th:5}
Model $\mdl_5$ for the Decision \& Control module, described by~\eqref{eq:m5:3}-\eqref{eq:m5:9} guarantees to provide a safe acceleration request with respect to the safety constraint $\langle x_c,v_c\rangle$, with $v_c=0$, as expressed by the \dl formula~\eqref{eq:m5:1}.
\end{theorem}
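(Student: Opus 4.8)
The plan is to prove the \dl formula~\eqref{eq:m5:1} in \keyx by the same loop-invariant strategy used for Theorems~\ref{th:1}--\ref{th:4}. Since $\mdl_5$ is only an acceleration-domain reformulation of $\mdl_3$, I expect the admissible region, and hence the loop invariant, to coincide with that of Model~\ref{alg:sm3}: the condition $\aReqi \ge -\aMinS$ appearing in \assume~\eqref{eq:m5:10} and in the \env test~\eqref{eq:m5:8} is, whenever $x < x_c$, exactly equivalent to $x_c - x \ge \frac{v^2}{2\aMinS}$, which is the admissible region of Model~\ref{alg:sm3}. I would therefore take $\upzeta \equiv \aReqi \ge -\aMinS$, but carry it in the cleared-denominator polynomial form $v^2 \le 2\aMinS\,(x_c-x)$; this avoids the division by zero that $\aReqi$ suffers at $x=x_c$ and makes the decision procedure's job easier. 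Applying the loop-invariant rule reduces the goal to the three obligations $(\assume)\to\upzeta$, $\upzeta\to[\env;\ctrl;\plant]\,\upzeta$, and $\upzeta\to(\guarantee)$.

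The first and third obligations are routine. Initiation holds because \assume~\eqref{eq:m5:10} contains $\aReqi \ge -\aMinS$ verbatim. For the usage step, the polynomial form of $\upzeta$ gives $v^2 \le 2\aMinS\,(x_c-x)$; when $x \ge x_c$ the right-hand side is non-positive, forcing $v^2 \le 0$ and hence $v=0$, which is exactly \guarantee~\eqref{eq:m5:2}.

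The preservation step is where the real work lies, and I expect it to be the main obstacle. After symbolically executing \env~\eqref{eq:m5:8} (the two nondeterministic assignments with their tests, which re-establish $\aReqi \ge -\aMinS$ for the freshly chosen $x_c$ and fix $a_n \in [-\aMinN, \aMaxN]$) and splitting \ctrl~\eqref{eq:m5:4} into its two branches, I would discharge the $[\plant]$ modality by solving the constant-acceleration ODE, turning the obligation into a real-arithmetic statement quantified over the evolution time $\tau \in [0,\Tau]$. Two features make this delicate. First, as the remark following Theorem~\ref{th:3} stresses, $\upzeta$ must be re-established at \emph{every} intermediate $\tau$, not merely at $\tau = \Tau$; along a constant-acceleration arc the required deceleration $v^2/\big(2(x_c-x)\big)$ attains its maximum over $[0,\Tau]$ at one of the two endpoints, so I would bound it by $\aMinN$ at the end (from the \safe test~\eqref{eq:m5:5} through the case-split of \aReqT~\eqref{eq:m5:6} and \ath~\eqref{eq:m5:7}, which itself separates the sub-case where the \ego stops before $\Tau$) and by $\aMinS$ at the start (from $\upzeta$). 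Second, bridging these two bounds is exactly where the new assumption $\aMinN < \aMinS$ in~\eqref{eq:m5:10} is consumed: the \safe test certifies the required deceleration only against the Nominal Controller's limit $\aMinN$, and the strictly larger Safety Controller limit $\aMinS$ must then absorb it to restore $\upzeta$ for the next cycle. The $\neg\safe$ branch, with $a_s = -\aMinS$, is the easier maximal-braking case. The technical bottleneck is therefore the nonlinear, rational quantifier elimination that arises once the two case-splits of \aReqT/\ath and the two \ctrl branches are combined with the time-quantified ODE solution, together with the bookkeeping of denominator signs (establishing $x_c - x > 0$ along the arc away from the stopping point) needed to clear those denominators soundly in \keyx.
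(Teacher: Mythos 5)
Your proposal coincides with the paper's proof: Theorem~\ref{th:5} is proved in \keyx via the loop-invariant rule, and the invariant the paper records, $\upzeta \equiv \frac{v^2}{2(x_c - x)} \ge -\aMinS$, is (up to sign convention and clearing the denominator) exactly your $\aReqi \ge -\aMinS$, i.e.\ $v^2 \le 2\aMinS\,(x_c-x)$ --- the admissible region of Model~\ref{alg:sm3} restated in the acceleration domain. Your additional detail on discharging the three obligations (the ODE solution quantified over $\tau \in [0,\Tau]$, the case splits of \aReqT and \ath, and the role of $\aMinN < \aMinS$) elaborates what the \keyx proof does internally; the paper itself records only the invariant.
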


\begin{proof}
Theorem~\ref{th:5} is proved in~\keyx using the loop invariant $\upzeta \equiv \frac{v^2}{2(x_c - x)} \ge -\aMinS$~\cite{selvarajgithub}.
\end{proof}

\begin{remark}
Note that the formulation~\eqref{eq:m5:6} can be extended to describe yet another threat metric like \emph{Brake Threat Number}~\cite{btn}, defined as the ratio of the longitudinal acceleration required to the maximum longitudinal acceleration. Though the decision function in all the models are developed based on solutions to~\eqref{eq:vehicleode} and assumptions on the safe behavior of the \ego, the models differ in the decision-making when it comes to the Safety Controller's interventions and this section highlights how different design variants can be modeled as hybrid programs. The main effort in the verification process consists in identifying appropriate assumptions and loop invariants.
\end{remark}

\section{MODEL PERFORMANCE ANALYSIS} \label{sec:performance}
In Section~\ref{sec:models}, theorems~\ref{th:1}, \ref{alg:sm3}, and \ref{th:5} verified that the corresponding models provide a safe acceleration request with respect to $\langle x_c, v_c \rangle$ where $v_c = 0$. Though all three models are proved safe, they differ in their decision-making and hence have different performance. In all the models, the Safety Controller, using a threat metric, assesses whether the nominal acceleration $a_n$ compromises safety and if so, intervenes with maximal braking. Consequently, the performance depends on when and how the intervention is made. While an overly conservative Safety Controller that intervenes often with maximal braking can guarantee safety, it also possesses a risk of limited user acceptance. Therefore, it is valuable to analyze the performance of the Safety Controller to choose a good  design. 

One way to conduct such an analysis is to simulate the models in different sets of scenarios (e.g. Fig~\ref{fig:sim2}) and compare them with a suitable performance metric. The shortcoming with such an approach is the intractability to compare \emph{all} possible scenarios. An alternative approach is to obtain a formal machine checked proof about the relation between different models for all parametric combinations. For instance, the condition \safe~\eqref{eq:m1:5}, \eqref{eq:m3:5}, and \eqref{eq:m5:5} determines when the Safety Controller intervenes in models~\ref{alg:sm1},~\ref{alg:sm3}, and~\ref{alg:sm5}, respectively. A Safety Controller that intervenes as late as possible and still guarantees safety is certainly a preferable choice for a good performance. Therefore, an obvious choice is to use the minimal safety distance to compare the models.    

In Model~\ref{alg:sm1}, the minimal safety distance for intervention is given by 
\begin{equation}
    \mone \triangleq v\Tau + \frac{\aMaxN\,\Tau^2}{2} + \frac{\left(v+\aMaxN\,\Tau\right)^2}{2\aMinS}
\end{equation}
and depends on the velocity \vel and the system parameters \aMaxN, \aMinS, and $\Tau$. Similarly, for Model~\ref{alg:sm3} and Model~\ref{alg:sm5}, \mthree and \mfive can be obtained from \eqref{eq:m3:6} and \eqref{eq:m5:6} respectively. Though~\eqref{eq:m5:6} uses \aReq as the threat metric, it is straightforward to reformulate it to obtain a safety distance as discussed in Section~\ref{sec:sm:5}. With the minimal safety distances for intervention obtained, the relation between the models for all parametric combinations can be captured by the first-order logic formula of real arithmetic:
\begin{align}
    \forall\begin{pmatrix} \aMinS \\ \aMinN \\ \aMaxN \\ \Tau \\ \vel \\ a_n \end{pmatrix}
    \Biggl( \left(\begin{bmatrix} 0 \\ 0 \\ 0 \\ \aMinN          \end{bmatrix} <
    \begin{bmatrix} \aMinN \\ \aMaxN \\ \Tau \\ \aMinS \end{bmatrix} \land 
    0 \le v \land -\aMinN \le a_n \le \aMaxN
    \right) \to \nonumber \\
    (\mone \ge \mthree \land \mfive \ge \mthree)\Biggr)\label{eq:modelcompare}
\end{align}
\begin{theorem}\label{th:6}
The Safety Controller in Model~\ref{alg:sm3} uses a smaller safety distance for intervention compared to the Safety Controllers in Model~\ref{alg:sm1} and Model~\ref{alg:sm5} as expressed by~\eqref{eq:modelcompare}.
\end{theorem}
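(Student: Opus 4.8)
The key observation is that~\eqref{eq:modelcompare} is a single closed $\forall$-sentence in the first-order theory of real arithmetic: \mone, \mthree, and \mfive are built from polynomials and quotients whose denominators ($2\aMinS$, $2\aMinN$, and, in the branch where it occurs, $2a_n$) have signs fixed by the antecedent, so after clearing denominators the matrix becomes a Boolean combination of polynomial inequalities in $\aMinS,\aMinN,\aMaxN,\Tau,v,a_n$. Such a sentence is decidable by quantifier elimination, which is exactly what \keyx discharges through its real-arithmetic backend. The plan is to reduce each of the two conjuncts $\mone\ge\mthree$ and $\mfive\ge\mthree$ to elementary sign checks by a single case split, both to make the argument humanly checkable and to keep the quantifier-elimination call tractable.

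Since \mthree~\eqref{eq:m3:6} and the distance reformulation of \mfive are both piecewise in the sign of $v+a_n\Tau$, I would first split on this sign; note that $v+a_n\Tau<0$ together with $v\ge 0$ and $\Tau>0$ forces $a_n<0$ and $-a_n>v/\Tau$. In the branch $v+a_n\Tau\ge 0$ both metrics take their first case. For $\mone\ge\mthree$ the difference is
\[
\mone-\mthree=\frac{(\aMaxN-a_n)\Tau^2}{2}+\frac{(v+\aMaxN\Tau)^2-(v+a_n\Tau)^2}{2\aMinS},
\]
which is nonnegative because $a_n\le\aMaxN$ gives $\aMaxN-a_n\ge 0$, while $v+\aMaxN\Tau\ge v+a_n\Tau\ge 0$ orders the two squares. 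For $\mfive\ge\mthree$ one first rewrites Model~\ref{alg:sm5}'s acceleration test into a distance: in this branch $\safe$ reads $\aReqT\ge-\aMinN$, which rearranges (the denominator $x_c-x-v\Tau-\frac{a_n\Tau^2}{2}$ being positive wherever the nominal choice is still under consideration) to the intervention distance $\mfive=v\Tau+\frac{a_n\Tau^2}{2}+\frac{(v+a_n\Tau)^2}{2\aMinN}$. Then
\[
\mfive-\mthree=(v+a_n\Tau)^2\left(\frac{1}{2\aMinN}-\frac{1}{2\aMinS}\right)\ge 0,
\]
which is precisely where the extra hypothesis $\aMinN<\aMinS$ of \assume~\eqref{eq:m5:10} is used.

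In the branch $v+a_n\Tau<0$ both metrics take their second case, and the distance reformulation of \aReqT~\eqref{eq:m5:6} collapses to the same expression as \mthree, so $\mthree=\mfive=-\frac{v^2}{2a_n}$ and $\mfive\ge\mthree$ holds with equality. For $\mone\ge\mthree$ I would use $-a_n>v/\Tau$ to bound $\mthree=\frac{v^2}{2(-a_n)}<\frac{v\Tau}{2}$ (the subcase $v=0$ is immediate), whereas every summand of \mone is nonnegative and its first summand is $v\Tau$, giving $\mone\ge v\Tau>\frac{v\Tau}{2}>\mthree$. Combining the two branches establishes both conjuncts of~\eqref{eq:modelcompare}.

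The hard part is not the algebra but the faithful translation of Model~\ref{alg:sm5}'s acceleration-domain test $\aReqT\ge\ath$ into an equivalent distance threshold \mfive. One has to track the sign of the denominator $x_c-x-v\Tau-\frac{a_n\Tau^2}{2}$ when clearing it, and verify that the two branches of \aReqT~\eqref{eq:m5:6} and \ath~\eqref{eq:m5:7} line up with the two branches of \mthree~\eqref{eq:m3:6}, so that the comparison becomes term-by-term. Once this reformulation is in place, each conjunct reduces to a sign check that quantifier elimination in \keyx settles uniformly over all admissible parameter values.
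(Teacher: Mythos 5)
Your proposal is correct in substance but follows a genuinely different route from the paper. The paper's own proof of Theorem~\ref{th:6} is entirely by machine: the closed first-order formula~\eqref{eq:modelcompare} is handed to \keyx, whose real-arithmetic backend (quantifier elimination over the reals) decides it outright; no manual case split or algebraic decomposition appears, and the accompanying remark only notes that no loop invariant is needed since there is no hybrid program involved. Your proof replaces this black-box call with an explicit, human-checkable argument: a case split on the sign of $v+a_n\Tau$ aligned with the piecewise definitions of \mthree and \mfive, followed by termwise sign comparisons of $\mone-\mthree$ and $\mfive-\mthree$. Both are valid; the paper's approach buys machine-checked certainty with essentially zero manual effort and reflects how the artifact is actually verified, while yours explains \emph{why} Model~\ref{alg:sm3} is the least conservative (worst-case $\aMaxN$ versus actual $a_n$ against Model~\ref{alg:sm1}, and the stricter threshold $\aMinN<\aMinS$ against Model~\ref{alg:sm5}) --- insight a quantifier-elimination call cannot convey. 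Note also that the hypothesis $\aMinN<\aMinS$ you invoke is available directly in the antecedent of~\eqref{eq:modelcompare}, so there is no need to import it from \assume~\eqref{eq:m5:10}.

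One caveat concerns your reconstruction of \mfive in the branch $v+a_n\Tau<0$. Taking~\eqref{eq:m5:6}--\eqref{eq:m5:7} literally, the test there is $-\frac{v^2}{2(x_c-x)}\ge -a_n$, whose right-hand side is \emph{positive} (since $a_n<0$ in this branch) while the left-hand side is nonpositive on the admissible region; the condition is then unsatisfiable, meaning Model~\ref{alg:sm5} always intervenes in that branch, and its intervention distance does not collapse to $-\frac{v^2}{2a_n}$ as you claim --- it is, if anything, larger. Your reading (threshold $a_n$ rather than $-a_n$, equivalently $x_c-x\ge -\frac{v^2}{2a_n}$) matches the paper's verbal description of the corresponding case in Model~\ref{alg:sm3} and is almost certainly the intended semantics, but the equality $\mfive=\mthree$ depends on that reading. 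Since the conjunct $\mfive\ge\mthree$ holds under either interpretation, this does not affect the conclusion; it only means your branch-2 step for \mfive should be stated as an inequality robust to both readings of \ath rather than as an equality.
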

\begin{proof}
Theorem~\ref{th:6} is proved in~\keyx~\cite{selvarajgithub}. 
\end{proof}
\begin{remark}
Note that no loop invariant was necessary to prove Theorem~\ref{th:6} since we only reason about a first-order logic formula of real arithmetic without a hybrid program.
\end{remark}

From Theorem~\ref{th:6}, we can conclude that the Safety Controller in Model~\ref{alg:sm3} does not intervene earlier than the controllers in Model~\ref{alg:sm1} and Model~\ref{alg:sm5} to guarantee safety and therefore performs better. It is indeed possible to arrive at the same conclusion by analytically deriving the relation between the different safety distances. Of course, such a manual approach does not scale in practice. However, obtaining a formal machine checked proof as discussed in this section scales well to reason about different designs and to compare them using different performance metrics.  

In all the models discussed so far, the Safety Controller intervenes with maximal braking, $a_s\coloneqq -\aMinS$. Though proven safe, it might not always be necessary to intervene with maximal braking in order to satisfy the constraint. Certainly, it is possible to change how the intervention is made in the models. For instance, if $\aReq$ is the acceleration required to satisfy the constraint at any given point, a modification to $\ctrl$ as shown in~\eqref{eq:ctrlnew} describes that, as proved by~\keyx, the Safety Controller can intervene with any acceleration value that is bounded by \aReq and \aMinS:
\begin{equation}\label{eq:ctrlnew}
    \ctrl \triangleq \opif \, (\safe) \; a_s\coloneqq a_n \; \opelse \;a_s \coloneqq *;\,?(-\aMinS \le a_s \land a_s \le \aReq)
\end{equation}
Clearly, modeling the intervention with nondeterminism as in~\eqref{eq:ctrlnew} covers different variations of controller implementation with the same proof. Section~\ref{sec:models} described how \dl is used to formally analyze the Decision \& Control module for an in-lane AD feature, which was the primary goal. In this section, we have shown how \keyx can be used to compare the verified models to aid in the design and development of the AD feature.   

\section{REQUIREMENT REFINEMENT} \label{sec:identification}
In this section, we discuss how \dl and in particular the nondeterministic operators can be used to refine the requirements for the components interacting with the Decision \& Control module. Theorems~\ref{th:1}-\ref{th:5} of Section~\ref{sec:models} verified that the respective models provide a safe acceleration request with respect to the safety constraint. The nondeterminism in the models verifies a wide range of concrete implementations. However, certain assumptions are included in the models to prove the \guarantee in each case. Broadly, the assumptions in the models are described in three ways:
\begin{enumerate}[label=(\roman*)]
    \item bounds on the system parameters, 
    \item assumptions on the interacting component/environment behavior, and 
    \item evolution domain constraint.
\end{enumerate}
These assumptions have resulted from the formal analysis of the decision-making in the Safety Controller and here, we show how such insights are used to formalize safety requirements for all the components in the Decision \& Control module and also help to identify relevant ODD conditions.  

Consider the functional safety requirement, FSR\,\ref{fsr:dc} in Section~\ref{sec:example} for the Decision \& Control module. Of course, FSR\,\ref{fsr:dc} can be formulated using the safety constraint pair, $\langle x_c, v_c \rangle$ and thus Theorem~\ref{th:1} verifies FSR\,\ref{fsr:dc} for Model~\ref{alg:sm1} in Section~\ref{sec:sm:sr}. Specifically in Model~\ref{alg:sm1}, the formulas \assume~\eqref{eq:m1:8} and \plant~\eqref{eq:m1:3} include the assumptions on the system parameters; \env~\eqref{eq:m1:7}--\eqref{eq:m1:7a} and \plant~\eqref{eq:m1:3} describe the assumptions on the components interacting with the Safety Controller; and finally the \plant~\eqref{eq:m1:3} includes the evolution domain constraint. A straightforward consequence of the domain constraint $v\ge0$ is that the evolution of the \plant would stop before reaching negative velocity, thus the \ego does not travel backwards. Naturally, such constraints can be used to refine the ODD for the AD feature as the safety guarantee clearly does not hold in situations where the \ego might possibly travel backwards, e.g. road geometries with high slope and less friction. Furthermore, the assumptions on the interacting components obtained from Model~\ref{alg:sm1} can be used to further refine FSR\,\ref{fsr:dc} as:
\begin{newfsr}{1.1}\label{fsr:1:1}
The Nominal Controller shall output a nominal acceleration $a_n$ such that $(-\aMinN \le a_n \le \aMaxN)$.
\end{newfsr}
\begin{newfsr}{1.2}\label{fsr:1:2}
The Situation Assessment shall output a critical position $x_c$ for a given \ego position $x$, velocity $v$, and maximum braking capability $\aMinS$ such that $x_c - x \ge \frac{v^2}{2\aMinS}$. 
\end{newfsr}
\begin{newfsr}{1.3}\label{fsr:1:3}
The Safety Controller shall at all times output a safe acceleration value $a_s$ to avoid a collision with any object in front if FSR~\ref{fsr:1:1} and FSR~\ref{fsr:1:2} are met.  
\end{newfsr}
\begin{newfsr}{1.4}\label{fsr:1:4}
The Vehicle Control shall always control the \ego according to the safe acceleration request $a_s$ to avoid a collision with any object in front.  
\end{newfsr}

The safety requirements thus obtained can be used in the subsequent analysis of the respective components using HPs and \dl. For instance, one conceivable but na\"ive algorithm for the Situation Assessment component is to sort the objects in front of the \ego and select the position of the closest object $x_l$ to obtain the critical position as 
\begin{equation}
    sa \triangleq x_c \coloneqq x_l + d \label{eq:cg:1} 
\end{equation}
where parameter $d$ denotes an admissible separation between the \ego and the object in front when the \ego is completely stopped. While this is a very simple description, it follows from a worst-case reasoning for object behavior (e.g. a leading vehicle coming to an immediate stop anytime). In this case, verifying FSR~\ref{fsr:1:2} can be translated into proving the \dl formula:
\begin{equation}\label{eq:sa:dl}
    (\assume) \to [(sense;\,sa;\,\ctrl;\,\plant)^*]\,\left(x_c - x \ge \frac{v^2}{2\aMinS}\right)
\end{equation}
where \ctrl and \plant model the dynamics of both the \ego and the object in the environment. Verifying~\eqref{eq:sa:dl} using \keyx requires the identification of necessary assumptions on the Sense module, which can subsequently be used to obtain requirements on the sensor range for the AD feature and/or refine the ODD conditions. Modeling the Situation Assessment in a modular way as described by~\eqref{eq:sa:dl} also gives the flexibility to easily reason about various algorithmic variants similar to how different threat metrics are handled in the models for the Safety Controller in Section~\ref{sec:models}. For example, to relax the worst-case reasoning for leading vehicles in~\eqref{eq:cg:1} from an immediate stop to braking with at least $B$ from velocity $v_l$ to a stop, then \emph{sa} in~\eqref{eq:sa:dl} can be replaced by:
\begin{equation}\label{eq:sa:2}
    sa \triangleq x_c \coloneqq x_l + \frac{v_l^2}{2B} + d. 
\end{equation}

Admittedly, the validity of the safety proofs in the deployed systems are highly dependent on the validity of the models, including the assumptions. For example, in Model~\ref{alg:sm1}, the \plant~\eqref{eq:m1:3} models the \ego behavior such that the safe acceleration request $a_s$ is accurately tracked by the Vehicle Control. However, it is often the case that actual deployed systems encounter disturbances, delays, etc., which makes it difficult to accurately track the request. Though \dl can model such disturbances and delays, the challenge manifests in identifying the corresponding invariant conditions to manage the complexity of the proofs to be constructed by the proof system.

\section{RELATED WORK}\label{sec:relwork}
Several approaches like testing, simulation, formal methods, etc. have been investigated to provide credible arguments that AD systems are safe and correct~\cite{koopman2019credible,riedmaier2020survey}. Formal methods, unlike approaches like testing and simulation, can exhaustively verify and guarantee the absence of errors through mathematical proofs of correctness of a model of the system. Formal analysis based on finite-state methods like supervisory control theory (SCT)~\cite{ramadge1989control}, or model checking~\cite{baier2008principles} have previously been used to reason about ADAS~\cite{korssen2017systematic}, AD~\cite{selvaraj2019verification,krook2019design,zita2017application} and other kinds of autonomous robotic systems~\cite{luckcuck2019formal}. Finite-state methods, though impressive in their own domains, limit the expressiveness of the models and require finite-state abstractions or approximations of the system. For example, formal analysis of traffic situations typically requires reasoning about continuous state variables like position, velocity, etc., that vary continuously with time, and obtaining finite-state abstractions of such entities risks an unconvincing argument of correctness. Furthermore, the use of exhaustive state-space exploration in SCT and model checking approaches is intractable for highly parametric AD systems. In comparison, the approach of this paper uses hybrid programs to model both continuous and discrete dynamics, and verifies them using mathematical proofs instead of exhaustive exploration, thereby covering infinitely many scenarios in each theorem and proof. Thus, a suitable trade-off between models closer to reality and a tractable formal analysis can be achieved. 

Another approach to formally verify the safety of AD  vehicles is through online reachability analysis~\cite{althoff2011set,onlinereachable}, where the verification is performed online by predicting the reachable sets from models of the AD vehicle and other traffic participants. A notable shortcoming in this approach is the heavy computational demand in the calculation of the reachable sets. Recent progress has been made in reducing the computational demand by making conservative model abstractions~\cite{gruber2018anytime} or by combining set-based reachability analysis with convex optimization~\cite{manzinger2020using}. In contrast, the verification approach used in our paper is completely offline and therefore does not contribute to the real-time computational demand in the AD vehicle. Moreover, the decision logic in the Safety Controller is---by design---intentionally simple in its behavior, thereby accommodating to the demands of the possibly complex Nominal Controller. Of course, the approach can be extended to include complex and more realistic models. However, a consequence of such realistic modeling is to deal with the proof complexity which might require additional manual effort to identify invariants and arithmetic simplifications to decide the validity of first-order formulas of real arithmetic.

Yet another approach to guarantee safety is to enforce set-invariance through control barrier functions as investigated in~\cite{ames2016control,lindemann2018control}. An important limitation in this method lies in the construction of such control barrier functions~\cite{li2021comparison}. In this regard, a similar problem with the deductive verification approach used in our paper is the identification of continuous invariants and loop invariants to improve proof automation~\cite{sogokon2019pegasus,platzer2010logical}. A comparison of the safety methods based on control barrier functions and reachability analysis is found in~\cite{li2021comparison}.

Differential dynamic logic (\dl) has been used in the specification and verification of adaptive cruise control~\cite{loos2011adaptive}, the European train control system~\cite{platzer2009european}, and aircraft collision avoidance~\cite{platzer2009formal}. The primary objective in those works is to demonstrate the application of \dl based verification in the respective case-studies. In comparison, in addition to showing how \dl is used in the safety argument of an AD feature, our paper discusses how such an approach can further aid in other development activities like comparison of the verified models and in the requirement refinement process. In~\cite{fulton2018safe,adelt2021formal}, \keyx in combination with runtime monitoring is used to guarantee safety of reinforcement learning-based controllers. Though our paper does not directly deal with reinforcement learning, as mentioned in Section~\ref{sec:prob}, the models and proofs presented can be used to guarantee the safety of any nominal control algorithm, including such based on reinforcement learning.   

This section presents a broad but inevitably incomplete overview of some research related to the use of formal methods to guarantee safety of automated systems. A more comprehensive survey on the formal specification and verification of autonomous robotic systems is found in~\cite{luckcuck2019formal}.

\section{CONCLUSION}\label{sec:conc}
The challenges in providing convincing arguments for the safe and correct behavior of AD systems is one obstacle their widespread commercial deployment. Formal methods and tools can help ensure the safety in various stages of AD development. This paper shows how differential dynamic logic (\dl) and the theorem prover \keyx can be used in the safety argument of an in-lane AD feature. Specifically,
\begin{enumerate}
    \item we have presented formal models and safety proofs of different design variants of a Decision \& Control module for an in-lane AD feature, 
    \item we have shown how the formal analysis helps in identifying the necessary assumptions and invariant conditions to guarantee safety, and 
    \item we have discussed and demonstrated how formal analysis using \dl and \keyx can not only be used to verify the different models but also aid in other development activities like refinement requirement and in evaluation of the different verified models.  
\end{enumerate}
Furthermore, the design and verification approach used in our paper identifies the necessary conditions on the interactions between the Safety Controller and the other components (e.g. Nominal Controller) to enforce safe behavior and therefore can be used to guarantee safety even if the Nominal Controller implements hard-to-verify (e.g., learning-based) algorithms. Though this paper only considers an in-lane AD feature, the approach can be extended to other types of AD features (e.g. lane changes). We believe that this work provides valuable insights for the use of formal methods in the safety argument of AD features. In the future we would like to refine the models such as introducing delays and disturbances, control decisions that combine steering and braking commands, etc., and investigate the proof effort required to guarantee their safety. A significant part of the verification effort depends on identifying the invariant conditions. In this paper, the loop invariants are identified manually. As part of future research, we will explore different methods to automatically identify such invariant candidates.

\bibliographystyle{IEEEtran}
\bibliography{IEEEabrv,mybib}

\begin{thebibliography}{10}
\providecommand{\url}[1]{#1}
\csname url@rmstyle\endcsname
\providecommand{\newblock}{\relax}
\providecommand{\bibinfo}[2]{#2}
\providecommand\BIBentrySTDinterwordspacing{\spaceskip=0pt\relax}
\providecommand\BIBentryALTinterwordstretchfactor{4}
\providecommand\BIBentryALTinterwordspacing{\spaceskip=\fontdimen2\font plus
\BIBentryALTinterwordstretchfactor\fontdimen3\font minus
  \fontdimen4\font\relax}
\providecommand\BIBforeignlanguage[2]{{%
\expandafter\ifx\csname l@#1\endcsname\relax
\typeout{** WARNING: IEEEtran.bst: No hyphenation pattern has been}%
\typeout{** loaded for the language `#1'. Using the pattern for}%
\typeout{** the default language instead.}%
\else
\language=\csname l@#1\endcsname
\fi
#2}}

\bibitem{litman2017autonomous}
T.~Litman, \emph{Autonomous vehicle implementation predictions}.\hskip 1em plus
  0.5em minus 0.4em\relax Victoria, Canada: Victoria Transport Policy
  Institute, 2017.

\bibitem{saej3016}
{SAE On-Road Automated Driving (ORAD) Committee and others}, ``{SAE J3016.
  Taxonomy and Definitions for Terms Related to Driving Automation Systems for
  on-Road Motor Vehicles},'' SAE International, USA, Tech. Rep., 2021.

\bibitem{koopman2016challenges}
P.~Koopman and M.~Wagner, ``Challenges in autonomous vehicle testing and
  validation,'' \emph{SAE International Journal of Transportation Safety},
  vol.~4, no.~1, pp. 15--24, 2016.

\bibitem{koopman2017autonomous}
{P. Koopman and M. Wagner}, ``Autonomous vehicle safety: An interdisciplinary
  challenge,'' \emph{IEEE Intelligent Transportation Systems Magazine}, vol.~9,
  no.~1, pp. 90--96, 2017.

\bibitem{koopman2019credible}
P.~Koopman, A.~Kane, and J.~Black, ``Credible autonomy safety argumentation,''
  in \emph{27th Safety-Critical Sys. Symp. Safety-Critical Systems Club,
  Bristol, UK}, 2019.

\bibitem{iso201826262}
ISO, ``{ISO 26262:2018--Road vehicles--Functional safety},''
  \emph{International Standard ISO/FDIS}, 2018.

\bibitem{isopas2019}
{ISO}, ``{ISO/PAS 21448:2019--Road vehicles--Safety of the intended
  functionality},'' \emph{International Standard ISO/FDIS}, 2019.

\bibitem{bergenhem2015reach}
C.~Bergenhem, R.~Johansson, A.~S{\"o}derberg, J.~Nilsson, J.~Tryggvesson,
  M.~T{\"o}rngren, and S.~Ursing, ``How to reach complete safety requirement
  refinement for autonomous vehicles,'' in \emph{CARS 2015-Critical Automotive
  applications: Robustness \& Safety}, 2015.

\bibitem{gyllenhammar2020towards}
M.~Gyllenhammar, R.~Johansson, F.~Warg, D.~Chen, H.-M. Heyn, M.~Sanfridson,
  J.~S{\"o}derberg, A.~Thors{\'e}n, and S.~Ursing, ``Towards an operational
  design domain that supports the safety argumentation of an automated driving
  system,'' in \emph{10th European Congress on Embedded Real Time Systems (ERTS
  2020)}, 2020.

\bibitem{koopman2019many}
P.~Koopman and F.~Fratrik, ``How many operational design domains, objects, and
  events?'' in \emph{SafeAI@ AAAI}, 2019.

\bibitem{colwell}
{Colwell, Ian and Phan, Buu and Saleem, Shahwar and Salay, Rick and Czarnecki,
  Krzysztof}, ``{An automated vehicle safety concept based on runtime
  restriction of the operational design domain},'' in \emph{{2018 IEEE
  Intelligent Vehicles Symposium (IV)}}.\hskip 1em plus 0.5em minus 0.4em\relax
  {IEEE}, {2018}, pp. {1910--1917}.

\bibitem{platzer2008differential}
A.~Platzer, ``Differential dynamic logic for hybrid systems,'' \emph{Journal of
  Automated Reasoning}, vol.~41, no.~2, pp. 143--189, 2008.

\bibitem{platzer2012logics}
{A. Platzer}, ``Logics of dynamical systems,'' in \emph{2012 27th Annual IEEE
  Symposium on Logic in Computer Science}.\hskip 1em plus 0.5em minus
  0.4em\relax IEEE, 2012, pp. 13--24.

\bibitem{platzer2018logical}
{{A. Platzer}}, \emph{Logical foundations of cyber-physical systems}.\hskip 1em
  plus 0.5em minus 0.4em\relax Cham, Switzerland: Springer, 2018, vol. 662.

\bibitem{fulton2015keymaera}
N.~Fulton, S.~Mitsch, J.-D. Quesel, M.~V{\"o}lp, and A.~Platzer, ``{KeYmaera X:
  An axiomatic tactical theorem prover for hybrid systems},'' in
  \emph{International Conference on Automated Deduction}.\hskip 1em plus 0.5em
  minus 0.4em\relax Springer, 2015, pp. 527--538.

\bibitem{sangiovanni2012taming}
A.~Sangiovanni-Vincentelli, W.~Damm, and R.~Passerone, ``Taming {D}r.
  {F}rankenstein: Contract-based design for cyber-physical systems,''
  \emph{European journal of control}, vol.~18, no.~3, pp. 217--238, 2012.

\bibitem{michon1985critical}
J.~A. Michon, ``A critical view of driver behavior models: what do we know,
  what should we do?'' in \emph{Human behavior and traffic safety}.\hskip 1em
  plus 0.5em minus 0.4em\relax Springer, 1985, pp. 485--524.

\bibitem{selvarajgithub}
\BIBentryALTinterwordspacing
Y.~Selvaraj, W.~Ahrendt, and M.~Fabian, ``{Formal Development of Safe Automated
  Driving using Differential Dynamic Logic},'' 1 2022. [Online]. Available:
  \url{https://github.com/yuvrajselvam/safe-AD-dL}
\BIBentrySTDinterwordspacing

\bibitem{jansson2005collision}
J.~Jansson, ``Collision avoidance theory: With application to automotive
  collision mitigation,'' Ph.D. dissertation, Link{\"o}ping University
  Electronic Press, 2005.

\bibitem{btn}
M.~Brannstrom, J.~Sjoberg, and E.~Coelingh, ``A situation and threat assessment
  algorithm for a rear-end collision avoidance system,'' in \emph{2008 IEEE
  Intelligent Vehicles Symposium}, 2008, pp. 102--107.

\bibitem{riedmaier2020survey}
S.~Riedmaier, T.~Ponn, D.~Ludwig, B.~Schick, and F.~Diermeyer, ``Survey on
  scenario-based safety assessment of automated vehicles,'' \emph{IEEE access},
  vol.~8, pp. 87\,456--87\,477, 2020.

\bibitem{ramadge1989control}
P.~J. Ramadge and W.~M. Wonham, ``The control of discrete event systems,''
  \emph{Proceedings of the IEEE}, vol.~77, no.~1, pp. 81--98, 1989.

\bibitem{baier2008principles}
C.~Baier and J.-P. Katoen, \emph{Principles of model checking}.\hskip 1em plus
  0.5em minus 0.4em\relax MIT press, 2008.

\bibitem{korssen2017systematic}
T.~Korssen, V.~Dolk, J.~van~de Mortel-Fronczak, M.~Reniers, and M.~Heemels,
  ``Systematic model-based design and implementation of supervisors for
  advanced driver assistance systems,'' \emph{IEEE Transactions on Intelligent
  Transportation Systems}, vol.~19, no.~2, pp. 533--544, 2017.

\bibitem{selvaraj2019verification}
Y.~Selvaraj, W.~Ahrendt, and M.~Fabian, ``Verification of decision making
  software in an autonomous vehicle: An industrial case study,'' in
  \emph{International Workshop on Formal Methods for Industrial Critical
  Systems}.\hskip 1em plus 0.5em minus 0.4em\relax Springer, 2019, pp.
  143--159.

\bibitem{krook2019design}
J.~Krook, L.~Svensson, Y.~Li, L.~Feng, and M.~Fabian, ``Design and formal
  verification of a safe stop supervisor for an automated vehicle,'' in
  \emph{2019 International Conference on Robotics and Automation (ICRA)}.\hskip
  1em plus 0.5em minus 0.4em\relax IEEE, 2019, pp. 5607--5613.

\bibitem{zita2017application}
A.~Zita, S.~Mohajerani, and M.~Fabian, ``Application of formal verification to
  the lane change module of an autonomous vehicle,'' in \emph{2017 13th IEEE
  Conference on Automation Science and Engineering (CASE)}.\hskip 1em plus
  0.5em minus 0.4em\relax IEEE, 2017, pp. 932--937.

\bibitem{luckcuck2019formal}
M.~Luckcuck, M.~Farrell, L.~A. Dennis, C.~Dixon, and M.~Fisher, ``Formal
  specification and verification of autonomous robotic systems: A survey,''
  \emph{ACM Computing Surveys (CSUR)}, vol.~52, no.~5, pp. 1--41, 2019.

\bibitem{althoff2011set}
M.~Althoff and J.~M. Dolan, ``Set-based computation of vehicle behaviors for
  the online verification of autonomous vehicles,'' in \emph{2011 14th
  International IEEE Conference on Intelligent Transportation Systems
  (ITSC)}.\hskip 1em plus 0.5em minus 0.4em\relax IEEE, 2011, pp. 1162--1167.

\bibitem{onlinereachable}
{M. Althoff and J. M. Dolan}, ``Online verification of automated road vehicles
  using reachability analysis,'' \emph{IEEE Transactions on Robotics}, vol.~30,
  no.~4, pp. 903--918, 2014.

\bibitem{gruber2018anytime}
F.~Gruber and M.~Althoff, ``Anytime safety verification of autonomous
  vehicles,'' in \emph{2018 21st International Conference on Intelligent
  Transportation Systems (ITSC)}.\hskip 1em plus 0.5em minus 0.4em\relax IEEE,
  2018, pp. 1708--1714.

\bibitem{manzinger2020using}
S.~Manzinger, C.~Pek, and M.~Althoff, ``Using reachable sets for trajectory
  planning of automated vehicles,'' \emph{IEEE Transactions on Intelligent
  Vehicles}, vol.~6, no.~2, pp. 232--248, 2020.

\bibitem{ames2016control}
A.~D. Ames, X.~Xu, J.~W. Grizzle, and P.~Tabuada, ``Control barrier function
  based quadratic programs for safety critical systems,'' \emph{IEEE
  Transactions on Automatic Control}, vol.~62, no.~8, pp. 3861--3876, 2016.

\bibitem{lindemann2018control}
L.~Lindemann and D.~V. Dimarogonas, ``Control barrier functions for signal
  temporal logic tasks,'' \emph{IEEE control systems letters}, vol.~3, no.~1,
  pp. 96--101, 2018.

\bibitem{li2021comparison}
Z.~Li, ``Comparison between safety methods control barrier function vs.
  reachability analysis,'' \emph{arXiv preprint arXiv:2106.13176}, 2021.

\bibitem{sogokon2019pegasus}
A.~Sogokon, S.~Mitsch, Y.~K. Tan, K.~Cordwell, and A.~Platzer, ``Pegasus: A
  framework for sound continuous invariant generation,'' in \emph{International
  Symposium on Formal Methods}.\hskip 1em plus 0.5em minus 0.4em\relax
  Springer, 2019, pp. 138--157.

\bibitem{platzer2010logical}
A.~Platzer, \emph{Logical analysis of hybrid systems: proving theorems for
  complex dynamics}.\hskip 1em plus 0.5em minus 0.4em\relax Springer Science \&
  Business Media, 2010.

\bibitem{loos2011adaptive}
S.~M. Loos, A.~Platzer, and L.~Nistor, ``Adaptive cruise control: Hybrid,
  distributed, and now formally verified,'' in \emph{International Symposium on
  Formal Methods}.\hskip 1em plus 0.5em minus 0.4em\relax Springer, 2011, pp.
  42--56.

\bibitem{platzer2009european}
A.~Platzer and J.-D. Quesel, ``European train control system: A case study in
  formal verification,'' in \emph{International Conference on Formal
  Engineering Methods}.\hskip 1em plus 0.5em minus 0.4em\relax Springer, 2009,
  pp. 246--265.

\bibitem{platzer2009formal}
A.~Platzer and E.~M. Clarke, ``Formal verification of curved flight collision
  avoidance maneuvers: A case study,'' in \emph{International Symposium on
  Formal Methods}.\hskip 1em plus 0.5em minus 0.4em\relax Springer, 2009, pp.
  547--562.

\bibitem{fulton2018safe}
N.~Fulton and A.~Platzer, ``Safe reinforcement learning via formal methods:
  Toward safe control through proof and learning,'' in \emph{Proceedings of the
  AAAI Conference on Artificial Intelligence}, vol.~32, no.~1, 2018.

\bibitem{adelt2021formal}
J.~Adelt, T.~Liebrenz, and P.~Herber, ``Formal verification of intelligent
  hybrid systems that are modeled with simulink and the reinforcement learning
  toolbox,'' in \emph{International Symposium on Formal Methods}.\hskip 1em
  plus 0.5em minus 0.4em\relax Springer, 2021, pp. 349--366.

\end{thebibliography}

\newpage


\begin{IEEEbiography}[{\includegraphics[width=1in,height=1.25in,clip,keepaspectratio]{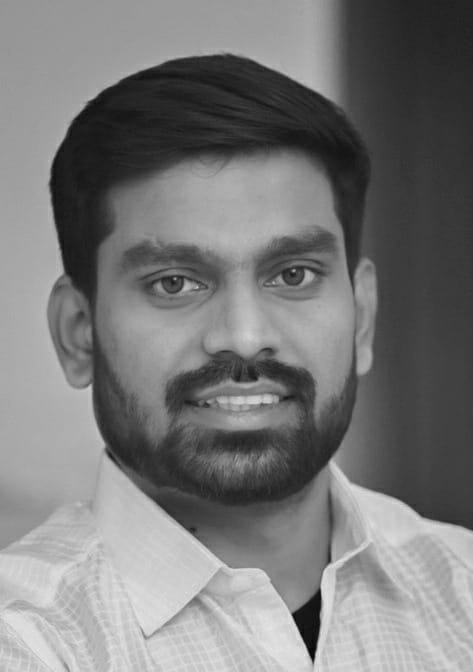}}]{Yuvaraj Selvaraj} received the B.E. degree in electrical engineering from Anna University, India and the M.S. degree in systems, control, and mechatronics from Chalmers University of Technology, Sweden, in 2011 and 2016, respectively.  Between 2011 and 2014, he was with ABB, developing control software for industrial automation. Between 2016 and 2017, he was with Volvo Car Corporation. Since 2017, he is working towards the Ph.D. degree at Chalmers and Zenseact, Sweden. His research focuses on the application of formal methods for automated driving.  
\end{IEEEbiography}

\begin{IEEEbiography}[{\includegraphics[width=1in,height=1.2in,clip,keepaspectratio]{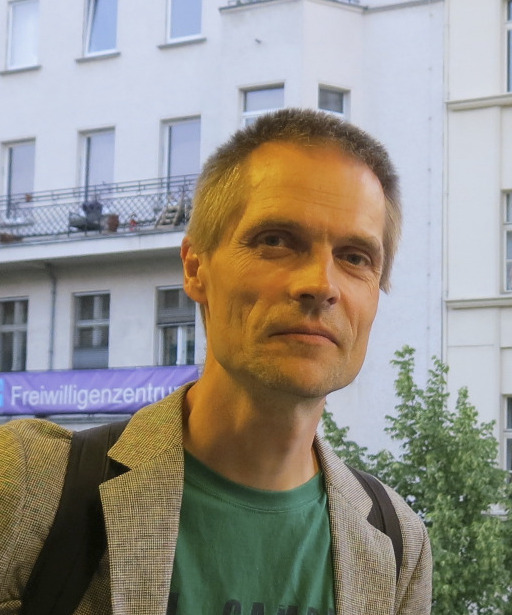}}]{Wolfgang Ahrendt} is professor in Computer Science at Chalmers University of Technology in Gothenburg, Sweden. He obtained a Ph.D. in Computer Science from the University of Karlsruhe in 2001. His main interests and contributions are in deductive verification of software, in runtime verification, and combinations of static verification with runtime verification and testing. Recent application areas of his work include safety of blockchain applications and automotive software safety.
\end{IEEEbiography}

\begin{IEEEbiography}[{\includegraphics[width=1in,height=1.2in,clip,keepaspectratio]{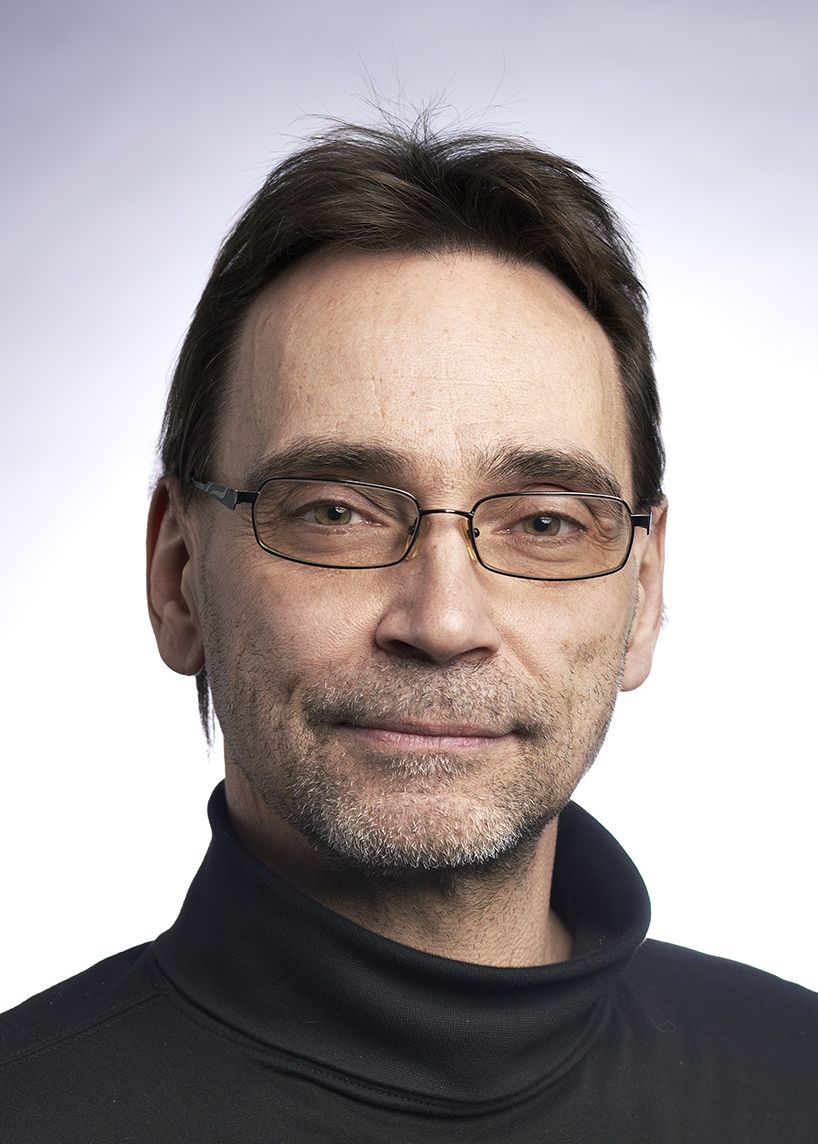}}]{Martin Fabian} is since 2014 full Professor in Automation and Head of the Automation Research group at the Department of Electrical Engineering, Chalmers University of Technology. He received his PhD in Control Engineering from Chalmers University of Technology in 1995. His research interests include formal methods for automation systems in a broad sense, merging the fields of Control Engineering and Computer Science. He has authored more than 200 publications, and is co-developer of the formal methods tool Supremica, which implements  state-of-the-art algorithms for supervisory control synthesis.
\end{IEEEbiography}
\vspace{11pt}

\vfill

\end{document}